\documentclass{article}


\usepackage[]{datetime}

\addtolength{\topmargin}{-1cm}
\addtolength{\textheight}{2cm}
\addtolength{\oddsidemargin}{-1.4cm}
\addtolength{\evensidemargin}{-1.4cm}
\addtolength{\textwidth}{2.5cm}

\usepackage{cite}

\usepackage{amssymb}
\usepackage{graphicx}
\usepackage[utf8x]{inputenc}
\usepackage{enumerate}
\usepackage{amsfonts}
\usepackage{amsmath}
\usepackage{stmaryrd}
\usepackage{xspace}
\usepackage{tikz}
\usepackage{tikz}
\tikzstyle{every node} = [draw, circle, fill = black, minimum size =
4pt, inner sep = 0pt]
\tikzstyle{normal} = [draw=none, fill = none]

\usepackage{hyperref}
\usepackage{amsthm}

\newcommand{\ie}{\emph{i.e.}}
\newcommand{\etc}{\emph{etc.}}
\newcommand{\eg}{\emph{e.g.}}
\newcommand{\etal}{\emph{et al.}}

\newcommand{\floor}[1]{\left\lfloor~#1~\right\rfloor}
\newcommand{\ceil}[1]{\left\lceil~#1~\right\rceil}
\newcommand{\third}[1]{\frac{#1}{3}} \newcommand{\paren}[1]{\left( #1
\right)} \newcommand{\acc}[1]{\left\{ #1 \right\}}

\newcommand{\N}{\mathbb{N}} 
\newcommand{\intv}[2]{\llbracket #1, #2 \rrbracket}
\newcommand{\tw}{\mathbf{tw}}
\newcommand{\pw}{\mathbf{pw}}
\newcommand{\dg}{\mathrm{deg}} 
\newcommand{\multidg}{\mathrm{{\rm deg}^m}} 
\newcommand{\vertices}[1]{{V(#1)}}
\newcommand{\edges}[1]{{E(#1)}}
\newcommand{\induced}[2]{#1[#2]} 
\newcommand{\mindeg}{\delta} \newcommand{\dege}{\delta^*} 
\newcommand{\degec}{\delta_c} 
\newcommand{\card}[1]{\left | #1 \right |} 
\newcommand{\subgraph}{\subseteq} \newcommand{\nb}{\mathrm{N}}
\newcommand{\lminor}{\leq_\mathrm{m}} 
\newcommand{\contains}{\geq_\mathrm{m}} 
\newcommand{\notcontain}{\not \geq_\mathrm{m}} 
\newcommand{\K}[1]{\mathrm{K}_{#1}} 
\newcommand{\sg}[1]{\mathrm{\Xi}_{#1}} 

\newcommand{\half}{\frac{1}{2}} \newcommand{\ad}{\mathrm{ad}}
\newcommand{\mult}{\mathrm{mult}} 
\renewcommand{\leq}{\leqslant}
\renewcommand{\geq}{\geqslant}

\newcommand{\pack}[1]{{\mathbf{pack}_{#1}}}
\newcommand{\cover}[1]{{\mathbf{cover}_{#1}}}
\newcommand{\desc}[1]{\mathrm{desc}_{#1}}

\newtheorem{theorem}{Theorem}
\newtheorem{prop}{Proposition}
\newtheorem{lemma}{Lemma}

\theoremstyle{remark} \newtheorem{remark}{Remark}
\theoremstyle{definition} \newtheorem{definition}{Definition}

\def\claimb{$$\vcenter\bgroup\advance\hsize by -8em\noindent
\refstepcounter{claimb}\ignorespaces\it} \makeatletter
\def\endclaimb{\rm\egroup\leqno(\theclaimb)$$\global\@ignoretrue}
\makeatother

\newenvironment{proofclaim}[1][]{
\noindent \emph{Proof~#1.}{}{}{}} {\hfill$\Diamond$\vspace{1em}}

\begin{document}

\title{Polynomial Gap Extensions of the Erdős--Pósa Theorem}

\author{Jean-Florent Raymond\thanks{Emails:
\href{mailto:jeanflorent.raymond@ens-lyon.fr}{{\sf
jeanflorent.raymond@ens-lyon.fr}},  
\href{mailto:sedthilk@thilikos.info}{{\sf
sedthilk@thilikos.info}}}~\thanks{LIRMM, Montpellier,
France.}\and Dimitrios
M. Thilikos$^*$\thanks{Department of Mathematics,
National and Kapodistrian University of Athens and CNRS (LIRMM). Co-financed by the E.U. (European Social Fund - ESF) and Greek national funds through the Operational Program ``Education and Lifelong Learning'' of the National Strategic Reference Framework (NSRF) - Research Funding Program: ``Thales. Investing in knowledge society through the European Social Fund''.}}
%
\date{\empty}

\maketitle

\abstract{\noindent Given a graph $H$,
we denote by ${\cal M}(H)$ all graphs that can be contracted to
$H$. The following extension of the Erdős--Pósa theorem holds: for
every $h$-vertex planar graph $H$, there exists a function $f_{H}$ such
that every graph $G$, either contains $k$ disjoint copies of graphs in
${\cal M}(H)$, or contains a set of $f_{H}(k)$ vertices meeting
every subgraph of $G$ that belongs in ${\cal M}(H)$. In this paper we
prove that this is the case for
every graph $H$ of pathwidth at most $2$ and, in particular, that
$f_{H}(k) = 2^{O(h^2)}\cdot k^{2}\cdot \log k$. As a main ingredient
of the proof of our result, we show that for every graph $H$ on $h$
vertices and pathwidth at most $2$, either $G$ contains $k$ disjoint
copies of $H$ as a minor or the treewidth of $G$ is upper-bounded by
$2^{O(h^2)}\cdot k^{2}\cdot \log k$. We finally prove that the
exponential dependence on $h$ in these bounds can be avoided if
$H=K_{2,r}$. In particular, we show that $f_{K_{2,r}}=O(r^2\cdot
k^2)$.}

\paragraph{Keywords:} Treewidth, Graph Minors, Erdős--Pósa Theorem

\section{Introduction}

In 1965, Paul Erdős and Lajos Pósa proved that every graph that does
not contain $k$ disjoint cycles, contains a set of $O(k\log k)$
vertices meeting all its cycles~\cite{ErdosP65}. Moreover, they gave a
construction asserting that this bound is tight. This classic result
can be seen as a ``loose'' min-max relation between covering and
packing of combinatorial objects. Various extensions of this result,
referring to different notions of packing and covering, attracted the
attention of many researchers in modern Graph Theory (see,
\eg~\cite{ErdosP07thee,GeelenK09thee}).

Given a graph $H$, we denote by ${\cal M}(H)$ the set of all graphs
that can be contracted to $H$ (\ie~if $H'\in {\cal M}(H)$, then $H$
can be obtained from $H'$ after contracting edges). We call the
members of ${\cal M}(H)$ {\em models} of $H$. Then the notions of
covering and packing can be extended as follows: we denote by ${\bf
cover}_{H}(G)$ the minimum number of vertices that meet every model of
$H$ in $G$ and by ${\bf pack}_{H}(G)$ the maximum number of
mutually disjoint models of $H$ in $G$. We say that a graph $H$ {\em
has the Erdős--Pósa Property} if there exists a function $f_{H} \colon \N\rightarrow \N$ such
that for every graph $G$,
\begin{eqnarray} \text{if}\ k={\bf pack}_{H}(G),\ \text{then}\ k \leq
{\bf cover}_{H}(G) \leq f_{H}(k)\label{tlp4o}
\end{eqnarray}

We will refer to $f_{H}$ as the {\em gap} of the {Erdős--Pósa}
Property. Clearly, if $H=K_{3}$, then~\eqref{tlp4o} holds for
$f_{K_{3}}=O(k \log k)$ and the general question is to find, for
each instantiation of $H$, the best possible estimation of the gap
$f_{H}$, if it exists.

It turns out that $H$ has the Erdős--Pósa Property if and only if $H$ is
a planar graph. This beautiful result appeared as a byproduct of the
Graph Minors series of Robertson and Seymour. In particular, it is a
consequence of the grid-exclusion theorem, proved
in~\cite{RobertsonS86GMV} (see also~\cite{DIE10B}).
\begin{prop}
\label{classic} There is a function $g \colon \N\rightarrow \N$
such that if a graph excludes an $r$-vertex planar graph $R$ as a
minor, then its treewidth is bounded by $g(r)$. \end{prop}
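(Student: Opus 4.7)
The plan is to prove the statement via the two classical ingredients that, together, yield the grid-exclusion theorem in its ``exclusion of an arbitrary planar minor'' form.

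First, I would show that every planar graph $R$ on $r$ vertices is a minor of some grid $\Gamma_{t(r)}$, where $\Gamma_t$ denotes the $t \times t$ square grid and $t(r)$ is an explicit function of $r$. The idea is to start from a planar embedding of $R$, place its vertices on grid points, and route its edges along disjoint grid paths; some care is needed at vertices of high degree, which can be simulated by small grid gadgets (e.g.\ by replacing a degree-$d$ vertex with a $d \times d$ subgrid and contracting it). A generous bound of the form $t(r) = O(r)$ or $t(r) = O(r^2)$ suffices for our purposes; what matters is only that $t$ is a computable function of $r$.

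Second, I would invoke (or sketch) the Excluded Grid Theorem of Robertson and Seymour: there exists a function $h \colon \N \to \N$ such that every graph of treewidth at least $h(t)$ contains $\Gamma_t$ as a minor. This is the deep ingredient and the main obstacle; its original proof proceeds via tangles and the structure theorem for graphs excluding a fixed minor, though we only need its statement as a black box here. Contrapositively, any graph excluding $\Gamma_t$ as a minor has treewidth at most $h(t)$.

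Combining the two steps, suppose $G$ excludes the $r$-vertex planar graph $R$ as a minor. Then $G$ must also exclude $\Gamma_{t(r)}$ as a minor, since $R \leq_{\mathrm{m}} \Gamma_{t(r)}$ and the minor relation is transitive. By the Excluded Grid Theorem, $\tw(G) \leq h(t(r))$. Setting $g(r) := h(t(r))$ gives the desired function. The main obstacle is entirely in the Excluded Grid Theorem itself; the minor-containment of planar graphs in grids is a short constructive argument, and the final combination is immediate by transitivity of the minor relation.
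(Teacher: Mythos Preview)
Your proposal is correct and matches the paper's treatment: the paper does not give its own proof of Proposition~\ref{classic} but states it as a known consequence of the grid-exclusion theorem of Robertson and Seymour~\cite{RobertsonS86GMV} (see also~\cite{DIE10B}), which is precisely the route you outline. Your two-step argument (planar graphs embed as minors in sufficiently large grids; the Excluded Grid Theorem bounds the treewidth of grid-free graphs) is the standard derivation the paper is referring to.
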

\noindent In~\cite{RobertsonS86GMV} Robertson, Seymour, and Thomas
conjectured that $g$ is a low degree polynomial function. Currently,
the best known bound for $g$ is $g(k)=2^{O(k\log k)}$ and follows
from~\cite{DiestelJGT99high} and~\cite{LeafS12sube} (see
also~\cite{RobertsonS86GMV,KawarabayashiK12line} for previous proofs
and improvements). As the function $g$ is strongly used in the
construction of the function $f_{H}$ in~\eqref{tlp4o}, the best, so
far, estimation for $f_{H}$ is far from being exponential in general.
This initiated a quest for detecting instantiations of $H$ where a
polynomial gap $f_{H}$ can be proved.

The first result in the direction of proving polynomial gaps for the
{Erdős--Pósa} Property appeared in~\cite{FominLMPS13quad} where $H$ is
the graph $\theta_{c}$ consisting of two vertices connected by $c$
multiple edges (also called {\em $c$-pumpkin graph}). In particular,
in~\cite{FominLMPS13quad} it was proved that
$f_{\theta_{c}}(k)=O(c^2k^{2})$. More recently Fiorini, Joret, and Sau
optimally improved this bound by proving that $f_{\theta_{c}}(k)\leq
c_{t}\cdot k\cdot \log k$ for some computable constant $c_{t}$
depending on $c$~\cite{FioriniJS13}. In~\cite{FioriniJW12excl}
Fiorini, Joret, and Wood proved that if $T$ is a tree, then
$f_{T}(k)\leq c_{T}\cdot k$ where $c_{T}$ is some computable constant
depending on $T$. Finally, very recently, Fiorini ~\cite{FioriniHJ13}
proved that $f_{K_{4}}=O(k\log k)$.

Our main result is a polynomial bound on $f_{H}$ for a broad family of
planar graphs, namely those of pathwidth at most $2$. We prove the
following:
\begin{theorem}
\label{first} If $H$ is an $h$-vertex graph of pathwidth at most $2$ and $h > 5$,
then~\eqref{tlp4o} holds for $f_{H}(k)=2^{O(h^2)}\cdot k^2 \cdot \log k$.
\end{theorem}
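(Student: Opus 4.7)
The strategy is to factor the proof through the treewidth-packing dichotomy announced in the abstract, and then concentrate the technical effort on that dichotomy.

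The central auxiliary statement to establish is: for every $h$-vertex graph $H$ of pathwidth at most $2$ and every graph $G$, either $\pack{H}(G)\geq k$, or $\tw(G)\leq 2^{O(h^2)}\cdot k^2\cdot \log k$. Proposition~\ref{classic} is the natural starting point, but applied off-the-shelf with the known bound $g(r)=2^{O(r\log r)}$ it would yield a treewidth threshold exponential in $k$, which is much too weak. To obtain the polynomial-in-$k$ bound I would exploit the special structure of graphs of pathwidth at most $2$: they admit a path decomposition with bags of size at most $3$, so $H$ is a subgraph of a ``thickened path''. This allows $H$ to be embedded into a wall using only a constant number of rows, and consequently $k$ disjoint copies of $H$ fit side by side into a wall of height $O(h)$ and length $O(kh)$. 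Reversing, a graph $G$ of treewidth above $2^{O(h^2)}\cdot k^2\log k$ should contain a wall of the required size and hence $k$ disjoint models of $H$. The factor $2^{O(h^2)}$ would come from enumerating combinatorial types of pathwidth-$2$ graphs on $h$ vertices, and the $k^2\log k$ factor from the tight worst-case behaviour already present in the classical Erdős--Pósa theorem.

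Given the dichotomy, Theorem~\ref{first} would follow by a standard balanced-separator recursion. Write $w(k):=2^{O(h^2)}\cdot k^2\log k$. If $\pack{H}(G)<k$, then $\tw(G)\leq w(k)$. Fix a tree decomposition of $G$ of width at most $w(k)$ and apply the weighted balanced-separator theorem to a maximum $H$-packing: one obtains a separator $S$ of size at most $w(k)+1$ such that each component of $G-S$ contains at most half of the packing. Recursing on each side with halved packing budget and adding $S$ to the cover at each level gives the recurrence $T(k)\leq 2T(k/2)+O(w(k))$, whose solution for $w(k)$ of the form $k^2\log k$ is $T(k)=O(w(k))$; the constants fold into the $2^{O(h^2)}$ factor.

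The principal obstacle is proving the dichotomy with the stated polynomial dependence on $k$. The difficulty combines two ingredients that have to be optimised simultaneously: (i) avoiding the super-polynomial blow-up of Proposition~\ref{classic} by proving a sharper grid- or wall-exclusion statement tailored to pathwidth-$2$ minors, and (ii) routing $k$ disjoint copies of $H$ inside the resulting wall with only $O(kh)$ overhead. The pathwidth-$2$ hypothesis is essential for (ii): it is precisely the property that lets $H$ be laid out linearly along the wall, so that the $k$ copies tile a region of length $O(kh)$ rather than of area quadratic in $kh$; for higher pathwidth the analogous embedding would require a wall of polynomial width and would ruin the dependence on $k$.
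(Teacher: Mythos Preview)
Your two-stage architecture---first a treewidth/packing dichotomy, then a balanced-separator recursion---is exactly the paper's. The recursion step is essentially what the paper does (Lemmata~\ref{l:pack_sep} and~\ref{l:sep_ep} and Theorem~\ref{t:excl_ep}), and solving $T(k)\le 2T(\alpha k)+O(w(k))$ via Akra--Bazzi is precisely the mechanism used.

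The gap is in your sketch of the dichotomy itself. You propose to find a wall of height $O(h)$ and width $O(kh)$ in any graph of treewidth $\ge 2^{O(h^2)}k^2\log k$, but no such wall-exclusion statement is available: the grid theorem (Proposition~\ref{classic}) was exponential at the time, so treewidth $2^{O(h^2)}k^2\log k$ only guarantees a wall of polylogarithmic size in $k$, far too small to host $k$ disjoint copies of $H$. You flag this as ``the principal obstacle'' but offer no way around it; the wall route simply does not close. Your attributions of the constants are also off: the $2^{O(h^2)}$ does not come from enumerating pathwidth-$2$ graphs (we are given a fixed $H$), and the classical Erd\H{o}s--P\'osa gap is $k\log k$, not $k^2\log k$.

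The paper avoids walls entirely. It reduces to finding $k$ disjoint models of the universal pathwidth-$2$ graph $\Xi_h$ (Lemma~\ref{l:pw2sg}), then splits on the contraction degeneracy $\degec(G)$. If $\degec(G)$ is large, Kostochka--Thomason (Proposition~\ref{p:kost}) yields a $K_{3kh}$ minor and hence $k\cdot\Xi_h$. If $\degec(G)$ is small, the paper works directly with a $k$-mesh (Proposition~\ref{p:mesh}): it extracts $2k_0$ subtrees carrying large $X_i$'s linked in pairs by many disjoint external paths (Lemmata~\ref{l:good_mesh}, \ref{l:independent}, \ref{l:mesh-plus}), finds in each tree a long path through $X_i$ (Lemma~\ref{l:path-tree}), and then applies Erd\H{o}s--Szekeres (Proposition~\ref{p:es}) to the order in which the connecting paths meet the two long paths of a pair. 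A monotone subsequence of length $h$ yields a $\Xi_h$ model; needing $(h-1)^2+1$ crossing paths, and propagating this through the mesh parameters, is exactly where the $2^{O(h^2)}$ factor arises. The $k^2\log k$ comes from squaring the mesh parameters and from the $\sqrt{\log}$ in the Kostochka threshold, not from the cycle-packing lower bound.
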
 Note that the contribution of $h$ in $f_{H}$ is
exponential. However, such a dependence can be waived when we restrict
$H$ to be $K_{2,r}$. Our second result is the following:

\begin{theorem}
\label{second} If $H=K_{2,r}$, then~\eqref{tlp4o} holds for $f_{H}(k)=
O(r^2 \cdot k^2$).
\end{theorem}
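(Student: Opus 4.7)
The plan is to follow the template used for Theorem~\ref{first}: first establish a polynomial treewidth bound for graphs that contain no $k$ disjoint $K_{2,r}$-models, and then turn this bound into a small hitting set. The two improvements over Theorem~\ref{first} --- the removal of the $\log k$ factor and the replacement of the $2^{\bo{h^2}}$ prefactor by $\bo{r^2}$ --- come from exploiting the very simple structure of $K_{2,r}$: observing that $K_{2,r}$ is obtained from the $r$-pumpkin $\theta_r$ by subdividing each parallel edge exactly once, a $K_{2,r}$-model is the same as two disjoint connected subgraphs linked by $r$ internally vertex-disjoint paths, each of length at least two.

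The first and main step is to prove that if $\pack{K_{2,r}}(G) < k$, then $\tw(G) \leq c\cdot r^2 \cdot k$ for some absolute constant $c$. The crucial ingredient is a replacement for Proposition~\ref{classic} tailored to this specific $H$: I would argue, via a direct bramble/tangle argument, that any graph of treewidth at least $c' \cdot r^2$ already contains a $K_{2,r}$-minor, and moreover that this model can be chosen so that its removal decreases the treewidth by at most $\bo{r^2}$. Iterating this extraction $k$ times then produces either $k$ disjoint $K_{2,r}$-models or the asserted treewidth bound on $G$.

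The second step converts the treewidth bound into a hitting set by peeling models off a tree decomposition of width $w = \bo{r^2 k}$. One iteratively locates a minimal subtree whose bags collectively contain a $K_{2,r}$-model; the bag at its parent has size at most $w+1$ and separates this model from the rest of the graph, so adding it to the cover destroys at least one model. After fewer than $k$ iterations no $K_{2,r}$-model remains, yielding a cover of size $\bo{w \cdot k} = \bo{r^2 \cdot k^2}$, as claimed.

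The main obstacle is the first step: obtaining the treewidth bound polynomial in both $r$ and $k$ while avoiding the exponential blow-up intrinsic to Proposition~\ref{classic}. The bipartite structure of $K_{2,r}$ and its interpretation as a subdivision of $\theta_r$ should make this tractable via connectivity-based arguments in the spirit of those used for $\theta_c$ in~\cite{FominLMPS13quad}; particular care is needed to keep the per-iteration treewidth drop at $\bo{r^2}$, which is what ultimately yields the quadratic-in-$k$ gap here rather than the $k^2\log k$ gap of Theorem~\ref{first}.
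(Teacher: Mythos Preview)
Your plan departs from the paper at the crucial first step, and that departure is where the gap lies. The paper does \emph{not} establish a linear-in-$k$ treewidth bound and then pay an extra factor of $k$ in the conversion. Instead it proves directly (Theorem~\ref{th2}) that $\tw(G)=O(k^{2}r^{2})$ whenever $G$ has no $k$ disjoint $K_{2,r}$-models, and then feeds this quadratic bound into the separator--recursion framework of Theorem~\ref{t:excl_ep}, which returns a cover of size $O(g(k))$ with no further factor of $k$. The proof of Theorem~\ref{th2} is a one-shot argument, not an iteration: either the contraction degeneracy of $G$ is at least $2kr$, in which case a degree-partition lemma (Lemma~\ref{l:big-degec}, via Propositions~\ref{p:twk2r} and~\ref{p:partition}) yields $k$ disjoint $K_{2,r}$-minors immediately; or it is small, and then the mesh machinery (Lemma~\ref{l:mesh-plus}, built on Lemmas~\ref{l:good_mesh} and~\ref{l:independent}) produces $2k$ pairwise disjoint connected sets, paired up by $r$ external paths each, which contract to $k$ disjoint copies of $K_{2,r}$.

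Your iteration, by contrast, hinges on the assertion that a graph of treewidth at least $c'r^{2}$ contains a $K_{2,r}$-model \emph{whose deletion lowers the treewidth by at most $O(r^{2})$} --- equivalently, a model on $O(r^{2})$ vertices. You flag this as the ``main obstacle'' but do not resolve it, and nothing in the paper supplies it: Proposition~\ref{p:twk2r} guarantees a model but says nothing about its size, and a $K_{2,r}$-model can occupy arbitrarily many vertices. The analogy with $\theta_{c}$ from~\cite{FominLMPS13quad} is suggestive but is not a proof, and those arguments do not transfer automatically to the subdivided pumpkin. Without this small-model lemma your iteration cannot start, so the proposal as written does not establish the theorem. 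Your Step~2 peeling is essentially fine and would give a cover of size $O(wk)$ from a width-$w$ decomposition; the problem is entirely upstream. Note also that if your Step~1 could be completed it would yield a strictly stronger intermediate result than Theorem~\ref{th2}, which is further evidence that the missing lemma is doing real, non-trivial work.
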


Both results above are based on a proof of Proposition~\ref{classic},
with polynomial $g$, for the cases where $R$ consists of $k$ disjoint
copies of $H$ and $H$ is either a graph of pathwidth at most $2$ or
$H=K_{2,3}$ (Theorems~\ref{th1} and~\ref{th2} respectively). For
this, we follow an approach that makes strong use of the $k$-mesh
structure introduced by Diestel \etal~\cite{DiestelJGT99high} in their
proof of Proposition~\ref{classic}. Our proof indicates that, when
excluding copies of some graph of pathwidth at most $2$, the entangled
machinery of~\cite{DiestelJGT99high} can be partially modified so that
polynomial bounds on treewidth are possible. Finally, these bounds are
then ``translated'' to polynomial bounds for the Erdős--Pósa gap using
a technique developed in~\cite{FominST11stre} (see
also~\cite{FominLMPS13quad}).

\section{Definitions and notations}
\label{defs}

\subsection{Basics}

In this paper, logarithms are binary.

\paragraph{Graphs and subgraphs}

A \emph{graph}~$G$ is a pair~$(V,E)$ where~$V$ is called the set of
vertices of~$G$ and~$E$ is called the set of edges of~$G$ and
satisfies~$E \subseteq V^2$. Two vertices~$v,u$ of~$G$ are said to be
\emph{adjacent} if~$(u,v) \in E$. A \emph{multigraph} is a graph where
multiple edges between two vertices are allowed.
In this paper, the graphs we consider are finite, undirected and
without loops. Unless otherwise specified, graphs are assumed to be
simple (\ie~multiedges are not allowed).

For any graph~$G$, $\vertices{G}$ (resp.\ $\edges{G}$) denotes the set
of vertices (resp.\ edges) of~$G$.
A graph~$G'$ is a \emph{subgraph} of a graph~$G$ if $\vertices{G'}
\subseteq \vertices{G}$ and $\edges{G'} \subseteq \edges{G}$ and we
write it~$G' \subgraph G$.
If~$X$ is a subset of~$\vertices{G}$, we note~$\induced{G}{X}$ the
\emph{subgraph of~$G$ induced by~$X$}, \ie~the graph $(X, \{xy \in
\edges{G},\ x \in X\ \mathrm{and}\ y \in X\})$.

When talking about graphs, unless otherwise stated, by \emph{disjoint}
we mean vertex-disjoint.
We denote by \emph{$\K{n}$} the complete graph on~$n$ vertices and by~$\K{p,q}$ the complete bipartite graph with partitions of size~$p$ and~$q$. For any integer~$k$ and any graph~$G$, the graph~$k \cdot G$ is the disjoint union of~$k$ copies of the graph~$G$.
A pair $\acc{A,B}$ is a separation of a graph $G$ if $A \cup B =
\vertices(G)$ and $G$ has no edge between $A \setminus B$ and $B
\setminus A.$ The integer $\card{A \cap B}$ is the \emph{order} of the
separation $\acc{A,B}.$
We assume that the reader is familiar with the basic graph classes:
paths, cycles, trees,~\etc.

\paragraph{Neighbourhood and degree}
For any vertex~$v \in \vertices{G}$, the \emph{neighbourhood}~$\nb_G(v)$ of~$v$ in~$G$ is the set of vertices that are adjacent to~$v$ in~$G$. The \emph{degree} of~$v\in \vertices{G}$ in~$G$, denoted~$\dg_G(v)$, is the cardinal of~$\nb_G(v)$. The minimum value taken by~$\dg_G$ in~$\vertices{G}$ is called the \emph{minimum degree} of~$G$ and denoted by~$\mindeg(G)$. When dealing with multigraphs, the \emph{multidegree} of a vertex~$v$ (written $\multidg(v)$) is the number of simple edges incident to~$v$. In these notations, we drop the subscript when it is obvious. The average degree over all vertices of a graph~$G$ is written~$\ad(G)$.

\paragraph{Contractions}
In a graph $G$, a \emph{contraction} of the edge~$e = (u,v) \in
\edges{G}$ is the operation that transforms $G$ into a graph~$H$ such
that $\vertices{H} = \vertices{G} \backslash \{u,v\} \cup \{v_e\}$ and
$\edges{H} = \{(x,y) \in \edges{G},\ x \not \in \{u,v\}\ \mathrm{and}\
y \not \in \{u,v\}\} \cup \{(x,v_e),\ (x,u) \in \edges{G}\
\mathrm{or}\ (x,v) \in \edges{G}\}$.
We say that a graph~$G$ can be \emph{contracted} to a graph~$H$ if~$H$
is the result of a sequence of edge contractions on~$G$.

\paragraph{Trees}
An acyclic connected graph is called a \emph{tree}. The vertices of
degree 1 of a tree are its \emph{leaves} and its other vertices are
called \emph{internal vertices}. A tree whose every internal vertex
has degree at most 3 is said to be \emph{ternary}. A \emph{binary
  tree} is a ternary tree whose one of the internal nodes, the
\emph{root}, is distinguished and has degree at most 2.

\subsection{More definitions}

\begin{definition}[graph $\sg{r}$]
 We define the graph $\Xi_{r}$ as the graph of the following form (see figure~\ref{fig:sg}).
\[
\left \{
  \begin{array}{l}
    \vertices{G} = \{x_0, \dots, x_{r-1},y_0, \dots, y_{r-1},z_0,
    \dots, z_{r-1}\}\\
    \edges{G} = \{(x_i,x_{i+1}), (z_i,z_{i+1})\}_{i \in \intv{0}{r-2}}
    \cup \{(x_i,y_{i}),(y_i,z_i)\}_{i \in \intv{0}{r-1}}
  \end{array}
\right .
\]  
\end{definition}

\begin{figure}[ht]
  \centering
  \begin{tikzpicture}
    \foreach \x in {0,...,4}
    {
      \draw (\x,0) node {};
      \node[normal] at (\x + .25, .25) {$z_{\x}$};
    }
    \foreach \x in {0,...,4}
    {
      \draw (\x,1) node {};
      \node[normal] at (\x + .25, 1.25) {$y_{\x}$};
    }
    \foreach \x in {0,...,4}
    {
      \draw (\x,2) node {};
      \node[normal] at (\x + .25, 2.25) {$x_{\x}$};
    }
    \foreach \x in {0,...,4}
    {
      \draw (\x,0) -- (\x,1);
      \draw (\x,1) -- (\x,2);
      \draw (\x,0) -- (+1,0);
      \draw (\x,2) -- (+1,2);
    }
  \end{tikzpicture}
\caption{The graph $\sg{5}$}
  \label{fig:sg}
\end{figure}

\begin{definition}[minor model]
  A \emph{minor model} (sometimes abbreviated \emph{model}) of a graph
  $H$ in a graph $G$ is a pair $({\cal M},\varphi)$ 
  where ${\cal M}$ is a collection of disjoint subsets
of $\vertices{G}$ such that $\forall X \in {\cal M}$, $\induced{G}{X}$
is connected and $\varphi \colon \vertices{H} \to {\cal M}$ is a
bijection that satisfies $\forall \{u,v\} \in \edges{H}, \exists u'
\in \varphi(u), \exists v' \in \varphi(v),\ \{u',v'\} \in
\edges{G}$. We say that a graph $H$ is a \emph{minor} of a graph $G$
($H \lminor G$) if there is a minor model of $H$ in $G$. Notice that
$H$ is a minor of $G$ if $H$ can be obtained by a subgraph of~$G$
after contracting edges.
\end{definition}

\begin{definition}[degeneracies]
  The \emph{degeneracy} of~$G$, written~$\dege(G)$, is the maximum value
  taken by~$\mindeg(G')$ over all subgraphs~$G'$ of~$G$:
  \[
  \dege(G) = \max_{G' \subseteq G} \mindeg(G')
  \]

  Similarly, the \emph{contraction degeneracy} of~$G$, introduced
  in~\cite{Bod04} and denoted~$\degec(G)$, is the maximum value
  of~$\mindeg(G')$ for all minors~$G'$ of~$G$:
  \[
  \degec(G) = \max_{G' \lminor G} \mindeg(G')
  \]
  Remark that, as a subgraph is a minor, for all graph~$G$ we
  have the following inequality
  \[\degec(G) \geq \dege(G)\]
  These definitions remains the same on multigraphs (we do not take into account the potential multiplicities of the edges).
\end{definition}

\begin{definition}[tree decomposition and treewidth]
A \emph{tree decomposition} of a graph~$G$
is a pair~$(T,\mathcal{X})$ where $T$ is a tree and
$\mathcal{X}$ a family $(X_t)_{t \in \vertices{{T}}}$ of
subsets of $\vertices{G}$ (called \emph{bags}) indexed by
elements of $\vertices{T}$ and such that
 \begin{enumerate}[(i)]
 \item $\bigcup_{t \in \vertices{{T}}} X_t = \vertices{G}$;
 \item for every edge~$e$ of~$G$ there is an element of~$\mathcal{X}$
containing both ends of~$e$;
 \item for every~$v \in \vertices{G}$, the subgraph of~${T}$
induced by $\{t \in \vertices{{T}}\mid {v \in X_t}\}$ is connected.
 \end{enumerate}

The \emph{width} of a tree decomposition~${T}$ is defined as
equal to $\max_{t \in \vertices{{T}}}~{\card{X_t} - 1}$. The
\emph{treewidth} of~$G$, written~$\tw(G)$, is the minimum width of any
of its tree decompositions.
\end{definition}

\begin{definition}[nice tree decomposition]
  A tree decomposition $(T,\mathcal{V}$ of a graph $G$ is said to be a \emph{nice} tree
  decomposition if
  \begin{enumerate}
  \item every vertex of $T$ has degree at most 3;
  \item $T$ is rooted on one of its vertices $r$ whose bag is empty ($V_r = \emptyset$);
  \item every vertex $t$ of $T$ is
    \begin{itemize}
    \item either a \emph{base node}, \ie~a leaf of $T$ whose
      bag is empty ($V_t = \emptyset$) and different from the root;

    \item or an \emph{introduce node}, \ie~a vertex with only one
      child $t'$ such that $V_{t'} = V_t \cup \{u\}$ for some $u \in
      \vertices{G}$;

    \item or a \emph{forget node}, \ie~a vertex with only one
      child $t'$ such that $V_t = V_{t'} \cup \{u\}$ for some $u \in
      \vertices{G}$;

    \item or a \emph{join node}, \ie~a vertex with two child $t_1$ and
      $t_2$ such that $V_t = V_{t_1} = V_{t_2}$.
    \end{itemize}
  \end{enumerate}
It is known that every graph has an optimal tree decomposition which
is nice \cite{Kloks94}.
\end{definition}

\begin{definition}[path decomposition and pathwidth]
  A \emph{path decomposition} of a graph~$G$ is a tree decomposition~$T$ of~$G$ such that~$T$ is a path. Its width is the width of the tree decomposition~$T$ and the \emph{pathwidth} of~$G$, written~$\pw(G)$, is the minimum width of any of its path decompositions.
\end{definition}

\begin{definition}[linked and externally~$k$-connected]
Let~$k$ be a positive integer, $G$ be a graph and $X,Y$ be two subsets
of $\vertices{G}$.

$X$ and $Y$ are said to be \emph{linked} by a path if there is a path
in~$G$ from an element of~$X$ to an element of~$Y$.

$X$ and~$Y$ are said to be \emph{$k$-connected in~$G$} if for all disjoint subsets $X' \subseteq X$ and $Y' \subseteq Y$ such that $\card{X'} = \card{Y'} \leq k$ there are~$\card{X'}$ disjoint paths between~$X'$ and~$Y'$ in~$G$.
If these paths have no internal vertices nor edges in $G[X \cup Y]$, then~$X$ and~$Y$ are said to be \emph{externally~$k$-connected in~$G$}.
If~$X = Y$,~$X$ is said to be \emph{(externally)~$k$-connected in~$G$}.
\end{definition}

\begin{definition}[$k$-mesh,~\cite{DIE10B}]
  An (ordered) pair~$(A,B)$ of subsets of~$\vertices{G}$ is a called a \emph{$k$-mesh of order s in~$G$} if $\vertices{G} = A \cup B$ and $G[A]$ contains a ternary tree~$T$ such that
  \begin{enumerate}[(i)]
  \item $A \cap B \subseteq \vertices{T}$ and $A\cap B\cap \vertices{T}$ are nodes of degree at most 2 in~$T$;
  \item at least one leaf of~$T$ is in~$A \cap B$;
  \item $\card{A \cap B} = s$;
  \item $A \cap B$ is externally~$k$-connected in~$B$.
  \end{enumerate}
\end{definition}

\section{Preliminaries}
\label{sec:pre}

\begin{prop}[\cite{DIE10B}, (12.14.5)] 
\label{p:mesh}
  Let~$G$ be a graph and let $p \geq q \geq 1$ be integers. If~$G$
  contains no~$q$-mesh of order~$p$ then~$G$ has treewidth
  less than~$p + q - 1$.
\end{prop}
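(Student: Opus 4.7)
I would argue the contrapositive: assuming $\tw(G) \geq p+q-1$, construct a $q$-mesh of order $p$ in $G$. High treewidth is equivalent, via the Seymour--Thomas duality, to the existence of a bramble (or a tangle) of order at least $p+q$; equivalently, for every separation $\{X,Y\}$ of $G$ of order at most $p+q-2$, one of the two sides, say $Y$, can be canonically selected as the ``big'' side, and this selection is consistent across separations. This highly-connected object is what makes a greedy construction of the mesh possible.

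Using this guiding device, I would build the ternary tree $T$ greedily, maintaining as invariants a ternary subtree $T$ of $G$ and a set $Z \subseteq \vertices{T}$ of vertices of degree at most $2$ in $T$ containing at least one leaf of $T$, together with a separation $\{A,B\}$ of $G$ such that $A = \vertices{T}$, $A \cap B = Z$, and $Z$ is externally $q$-connected in $B$. I would start with $T$ a single edge and $Z$ its endpoints. At each step, if $\card{Z}=p$ the mesh $(A,B)$ is produced and we are done. Otherwise $\card{Z}<p$ so the separation $\{A,B\}$ has order strictly less than $p+q-1$; by the tangle's choice, $B$ is the ``big'' side, and one can use a Menger-style argument in $B$ to extend $T$ by a short branch into $B$, adding one new vertex to $Z$ while keeping $T$ ternary.

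The delicate point --- and the main obstacle --- is preserving the external $q$-connectivity of $Z$ after each extension, since absorbing vertices into $A$ can destroy previously existing systems of $\card{Z}$ disjoint paths in $B$. I would handle this by rerouting: whenever an extension would create a separator of order below $q$ inside $B$, the tangle again dictates which side to pursue, and the $q$ disjoint paths certifying external connectivity can be re-routed through the ``big'' side before finalising the extension. The ternary restriction on $T$ reflects the branching of join nodes in an implicit tree decomposition induced by this construction, and the bound $p+q-1$ on treewidth emerges as the sum of the maximum attainable size $p$ of $Z$ and the ``slack'' $q-1$ needed to certify the external $q$-connectivity at each step of the recursion.
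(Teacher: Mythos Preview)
The paper does not prove this proposition at all: it is stated in Section~\ref{sec:pre} as a preliminary result quoted from Diestel's textbook~\cite{DIE10B}, (12.14.5), and no argument is given here. So there is no ``paper's own proof'' to compare your attempt against.

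That said, your sketch is in the spirit of the argument in~\cite{DIE10B} (which in turn follows Diestel--Jensen--Gorbunov--Thomassen~\cite{DiestelJGT99high}). The textbook proof, however, is organised in the \emph{direct} direction rather than via the contrapositive: assuming no $q$-mesh of order $p$ exists, one constructs a tree decomposition of width less than $p+q-1$ by repeatedly separating along small cuts, and the absence of a mesh is exactly what guarantees that the recursion terminates with bags of the right size. Your contrapositive route through tangles and a greedy tree build is a legitimate alternative viewpoint, and morally dual to the textbook argument, but the part you flag as ``delicate'' --- maintaining external $q$-connectivity of $Z$ in $B$ after each extension --- is genuinely the crux, and your description (``rerouting through the big side'') is too vague to count as a proof. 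In the actual argument this step is handled by a careful Menger-type case analysis on the small separators that obstruct the extension, and getting the bookkeeping right (in particular, ensuring the new leaf lands in $A\cap B$ while the ternary bound on $T$ and the size bound on $Z$ are preserved simultaneously) is where most of the work lies. If you want to turn your sketch into a proof, that is the paragraph you need to write out in full.
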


\begin{prop}[follows from~\cite{DIE10B}, (2.14.6)] 
  \label{p:tree_cut}
  Let~$k \geq 2$ be an integer. Let~$T$ be a tree of maximum degree at
  most 3 and $X \subseteq \vertices{T}$. Then~$T$ has
  $\floor{\frac{\card{X}}{2k-1}}-1$ vertex-disjoint subtrees each
  containing at least~$k$ vertices of~$X$. 
\end{prop}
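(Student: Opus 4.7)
My plan is to argue via an iterative greedy extraction of subtrees from $T$. The first move is to root $T$ at a \emph{leaf} $\ell$: since $T$ has maximum degree at most~$3$ and $\ell$ has degree~$1$, in the resulting rooted tree every vertex has at most two children. This uniform bound on the number of children is the main ingredient for controlling the size of each extracted piece; it would fail if we rooted at an internal vertex of degree~$3$. We may assume $\card{X}\geq k$, since otherwise the claimed bound is trivially non-positive.

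Next, I would iterate the following step on the current tree $T'$ (initially $T'=T$): select a vertex $v$ of maximum depth in $T'$ whose rooted subtree $T'_v$ contains at least $k$ vertices of $X$, append $T'_v$ to the output collection, and delete $\vertices{T'_v}$ from $T'$. Such a $v$ always exists as long as at least $k$ vertices of $X$ remain in $T'$, because the root itself is a candidate. Deleting a rooted subtree leaves a tree still rooted at $\ell$, so the procedure continues cleanly until fewer than $k$ vertices of $X$ remain.

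The key inequality is that each extracted $T'_v$ contains at most $2k-1$ vertices of $X$. By the maximality of the depth of $v$, each of its (at most two) children $u$ satisfies $\card{\vertices{T'_u}\cap X}\leq k-1$, and therefore $\card{\vertices{T'_v}\cap X}\leq 1+2(k-1)=2k-1$. Combined with the lower bound of $k$ vertices of $X$ per extracted subtree, a short arithmetic check on the stopping criterion shows that the procedure performs at least $\floor{\card{X}/(2k-1)}$ rounds, which is even slightly stronger than the claimed $\floor{\card{X}/(2k-1)}-1$. The extracted subtrees are vertex-disjoint by construction.

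I expect the only real subtlety to be the bookkeeping around the degree hypothesis. The tempting move is to root $T$ at a centroid or at some arbitrary internal vertex, but then the root could have three children in the rooted tree, degrading the upper bound at the top of the recursion to $3k-2$ instead of $2k-1$ and losing the factor~$2$ that drives the statement. Rooting at a leaf cleanly sidesteps this, relying only on the elementary fact that any tree with at least two vertices contains a leaf.
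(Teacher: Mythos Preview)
Your argument is correct. The paper does not supply its own proof of this proposition; it simply cites Diestel's textbook, so there is no internal proof to compare against. Your greedy extraction---rooting at a leaf to force at most two children everywhere, then repeatedly peeling off a deepest subtree holding at least $k$ elements of $X$---is precisely the standard proof of this fact (and is essentially how Diestel argues it). Your count is even a bit sharper than what the proposition asserts: since each extracted piece carries at most $2k-1$ elements of $X$, after $j$ rounds at least $\card{X}-j(2k-1)$ elements remain, so the procedure runs for at least $\floor{\card{X}/(2k-1)}$ rounds rather than $\floor{\card{X}/(2k-1)}-1$. The remark about why rooting at a leaf (rather than an internal vertex) is the right move is accurate and worth keeping.
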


\begin{prop}[\cite{BodlaenderLTT97onin}] 
\label{p:twk2r}
  For any integer~$r \geq 1$ and any graph~$G$,
  \[
  G \notcontain K_{2,r} \Rightarrow \tw(G) < 2r-2
  \]
\end{prop}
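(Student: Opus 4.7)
I would prove the proposition by the contrapositive: if $\tw(G) \geq 2r - 2$, then $\K{2,r} \lminor G$. Equivalently, I would construct, for every $\K{2,r}$-minor-free graph $G$, a tree decomposition of width at most $2r - 3$, proceeding by induction on $\card{\vertices{G}}$.

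For the reduction, since the treewidth of a graph is the maximum treewidth over its biconnected components, I may assume $G$ is 2-connected. The base case $\card{\vertices{G}} \leq 2r - 2$ is immediate, as $\tw(G) \leq \card{\vertices{G}} - 1 \leq 2r - 3$. For the inductive step on a 2-connected $\K{2,r}$-minor-free graph $G$ with at least $2r - 1$ vertices, the key observation is that any two non-adjacent vertices $u,v$ can be joined by at most $r - 1$ internally disjoint paths: otherwise, contracting each of $r$ such paths to a single edge would yield $\K{2,r}$ as a minor of $G$, contradicting the hypothesis. By Menger's theorem, between any non-adjacent pair $(u,v)$ there is then a vertex separator of size at most $r - 1$. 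Using such a separator $S$, I would split $G$ into two subgraphs $G_1$ and $G_2$ with $\vertices{G_1} \cap \vertices{G_2} = S$, apply induction to each part, and glue the resulting decompositions along bags containing $S$.

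The main obstacle is controlling the width of the merged decomposition. A naive combination across a separator of size $r - 1$ risks blowing the width past the target $2r - 3$. To avoid this, I would strengthen the inductive hypothesis so that, for any distinguished set $S$ of size at most $r - 1$, $G$ admits a tree decomposition of width at most $2r - 3$ in which some bag contains $S$ entirely; then merging the two decompositions at those distinguished bags produces at worst a bag of size $\card{S_1 \cup S_2} \leq 2(r - 1) = 2r - 2$, matching the required width. Handling the corner cases—adjacent pairs for which Menger does not directly apply, separators interacting with low-connectivity pieces of $G_1$ or $G_2$, and the situation where $G$ is dense enough that no small separator exists so that the $\K{2,r}$-minor must be extracted directly—requires additional combinatorial care and is where the bulk of the technical work lies.
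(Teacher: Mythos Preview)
The paper does not prove this proposition; it is quoted from \cite{BodlaenderLTT97onin} and used as a black box (in Lemma~\ref{l:big-degec}). So there is no in-paper argument to compare your proposal against.

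On its own merits, your proposal is a plan rather than a proof. The Menger step is correct: if $u,v$ are non-adjacent and joined by $r$ internally disjoint paths, contracting each path to length~$2$ yields a $\K{2,r}$ model, so any non-adjacent pair has a separator of size at most $r-1$. But you yourself identify the real obstacle and then defer it: to make the recursion close you strengthen the hypothesis to ``some bag contains a prescribed set $S$ with $\card{S}\le r-1$'', yet when you split at a separator $T$ you must simultaneously keep track of the \emph{original} prescribed set $S_0$, which may straddle both sides. Your merge estimate $\card{S_1\cup S_2}\le 2(r-1)$ tacitly assumes the prescribed sets on the two sides are just $T$, not $T$ together with the relevant piece of $S_0$; getting both into a single bag without exceeding width $2r-3$ is exactly the part you wave away as ``corner cases''. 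A second loose end: you need a non-adjacent pair to run Menger, so you must argue that a $2$-connected $\K{2,r}$-minor-free graph on at least $2r-1$ vertices is not complete; this is easy for $r\geq 3$ but worth stating.

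Finally, note that the inequality in the proposition is tight to the point of being delicate at small $r$ (for $r=2$, $K_3$ is $\K{2,2}$-minor-free with $\tw(K_3)=2$, so the strict ``$<2r-2$'' cannot hold there); the cited source presumably states the bound as $\tw(G)\le 2r-2$. Make sure whatever argument you complete is calibrated to the non-strict version.
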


\begin{prop}[\cite{JGT:Stiebitz}] 
\label{p:partition}
  For any integer~$k \geq 1$ and any graph~$G$, there exist sets
  $V_1,\dots, V_k$ partitioning $\vertices{G}$ (\ie~$\sqcup_{i \in
    \intv{1}{k}} V_i = \vertices{G}$) such that
\[
\forall i \in \intv{1}{k}, \forall u \in V_i,\ \dg_{V_i}(v) \geq \frac{\dg_G(v)}{k} - 1
\]
In particular, if $\mindeg(G) \geq p$ then $\forall i \in \intv{1}{k},\ \mindeg(\induced{G}{V_i}) \geq \frac{p}{k} -1$
\end{prop}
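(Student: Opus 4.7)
My plan is a classical extremal argument. I would choose a partition $V_1, \ldots, V_k$ of $\vertices{G}$, allowing empty parts, that maximizes the potential $\Phi = \sum_{i=1}^{k} \card{\edges{\induced{G}{V_i}}}$, i.e.\ the total number of edges with both endpoints in a common part. A maximizer exists because $\Phi$ is a nonnegative integer bounded by $\card{\edges{G}}$.

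The next step is to analyze the effect of moving a single vertex. Fix $v \in V_i$ and any $j \in \intv{1}{k}$. Moving $v$ from $V_i$ into $V_j$ produces another partition whose potential differs from $\Phi$ by exactly $\dg_{V_j}(v) - \dg_{V_i}(v)$: after the move, $v$ no longer contributes the $\dg_{V_i}(v)$ edges it had inside $V_i$ but now contributes the $\dg_{V_j}(v)$ edges from $v$ to $V_j$. By maximality of $\Phi$ this change is non-positive for every $j$, so $\dg_{V_i}(v) \geq \dg_{V_j}(v)$ for every $j$. Summing over $j$ gives $k \cdot \dg_{V_i}(v) \geq \sum_j \dg_{V_j}(v) = \dg_G(v)$, hence $\dg_{V_i}(v) \geq \dg_G(v)/k \geq \dg_G(v)/k - 1$. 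The ``in particular'' clause is then immediate: if $\mindeg(G) \geq p$, every $v \in V_i$ satisfies $\dg_{V_i}(v) \geq p/k - 1$, so $\mindeg(\induced{G}{V_i}) \geq p/k - 1$.

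There is really no substantial obstacle here; the whole argument is the standard ``extremal partition with local exchange'' recipe. The only mildly delicate point is the computation of the change of $\Phi$ under a single-vertex move, where one must correctly account for the edges incident to $v$ without double-counting (the fact that $v \notin V_i \setminus \{v\}$ is what keeps the bookkeeping clean). Note that the $-1$ slack in the statement is unnecessary: the proof actually produces the stronger integer bound $\dg_{V_i}(v) \geq \lceil \dg_G(v)/k \rceil$, which for the later applications simply means the subsequent degree estimates have a bit of room to spare.
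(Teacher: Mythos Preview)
The paper does not supply its own proof of this proposition; it is simply quoted from Stiebitz~\cite{JGT:Stiebitz}. Your extremal argument is correct and, as you observe, actually yields the stronger bound $\dg_{V_i}(v)\geq \dg_G(v)/k$ without the~$-1$.

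It is worth noting that Stiebitz's theorem addresses a genuinely harder problem: partitioning $\vertices{G}$ into parts with \emph{prescribed, possibly unequal} minimum-degree thresholds $s_1,\dots,s_k$ whenever $\mindeg(G)\geq s_1+\cdots+s_k+k-1$. For that asymmetric version the ``maximise the number of internal edges'' potential fails, and Stiebitz needs a more delicate argument. For the uniform threshold actually used here (and in Lemma~\ref{l:big-degec}), your folklore local-exchange proof is both simpler and sharper than what the citation would deliver; the $-1$ slack in the paper's statement is an artefact of routing the claim through Stiebitz rather than through the direct argument you give.
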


\begin{prop}[Erd\H{o}s--Szekeres Theorem, \cite{ErdosSzekeres}]\label{p:es} 
 Let~$k$ and~$\ell$ be two strictly positive integers. Then any
 sequence of~${(\ell-1)(k-1) + 1}$ distinct integers contains either
 an increasing subsequence of length~$k$ or a decreasing subsequence
 of length~$\ell$.
\end{prop}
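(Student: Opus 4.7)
The plan is to use the classical pigeonhole argument due to Seidenberg. Let the sequence be $a_1, a_2, \ldots, a_N$ with $N = (\ell-1)(k-1) + 1$. To each index $i \in \intv{1}{N}$ I would associate a pair $(u_i, d_i)$, where $u_i$ denotes the length of the longest strictly increasing subsequence of $a_1, \ldots, a_N$ ending at position $i$, and $d_i$ denotes the length of the longest strictly decreasing subsequence ending at position $i$.

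The key observation is that the map $i \mapsto (u_i, d_i)$ is injective. Indeed, for any two indices $i < j$, since the terms are distinct, either $a_i < a_j$ or $a_i > a_j$. In the first case, any increasing subsequence ending at $a_i$ extends by appending $a_j$, so $u_j \geq u_i + 1$; in the second case, by the symmetric argument, $d_j \geq d_i + 1$. In either situation $(u_i, d_i) \neq (u_j, d_j)$.

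I would then argue by contradiction. Suppose the sequence contains neither an increasing subsequence of length $k$ nor a decreasing subsequence of length $\ell$. Then for every $i$ we have $u_i \in \intv{1}{k-1}$ and $d_i \in \intv{1}{\ell-1}$, so the pair $(u_i, d_i)$ takes at most $(k-1)(\ell-1)$ distinct values. But the injectivity established above forces $N = (\ell-1)(k-1)+1$ distinct pairs in a set of size $(\ell-1)(k-1)$, a contradiction by the pigeonhole principle.

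There is essentially no obstacle; the whole proof is a one-page pigeonhole argument, and the only minor care needed is to record the fact that the integers are \emph{distinct}, which is what allows the strict inequality (and hence the strict extension of the monotone subsequence) in the injectivity step. The statement as given uses distinct integers, so this matches the hypothesis cleanly.
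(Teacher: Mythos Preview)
Your argument is the standard Seidenberg pigeonhole proof and is correct as written. Note, however, that the paper does not supply its own proof of this proposition: it is stated as a cited result from~\cite{ErdosSzekeres} and used as a black box in the proof of Theorem~\ref{th1}, so there is no in-paper argument to compare against.
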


\begin{prop}[\cite{Kostochka82}, \cite{Thomason01}, \cite{DIE10B} (7.2.3)]\label{p:kost}
There is a real constant $c$ such that every graph of average degree
more than a function $c(t) = (c + o(1)) t \sqrt{\log t}$ contains
$K_t$ as minor. According to \cite{Kostochka82}, $c(t) < 648 \cdot t
\sqrt{\log t}.$
\end{prop}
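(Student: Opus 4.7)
The plan is to reduce from the average-degree hypothesis to a minimum-degree hypothesis, and then to build the $K_t$ minor via a probabilistic partition of the vertex set. First, by iteratively deleting vertices of degree below half the current average degree, a standard double-counting argument shows that any graph $G$ of average degree $d$ contains a subgraph of minimum degree at least $d/2$. So it suffices to prove the following strengthening: if $G$ has minimum degree $\mindeg(G) \geq c' \cdot t\sqrt{\log t}$, then $G \contains K_t$.

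For such a $G$, I would partition $\vertices{G}$ uniformly at random into $t$ parts $V_1,\dots,V_t$ and try to extract a $K_t$ minor whose branch sets are connected subgraphs of the $\induced{G}{V_i}$. Two properties must hold with positive probability: (i) each $\induced{G}{V_i}$ contains a large connected component, serving as the branch set, and (ii) every pair $(V_i,V_j)$ is joined by at least one edge. Since every vertex has in expectation $\mindeg(G)/t$ neighbours in any given part, Chernoff-type concentration makes (i) almost sure once $\mindeg(G)/t$ is a sufficiently large constant, while the expected number of edges between any prescribed pair is of order $\mindeg(G)\cdot\card{\vertices{G}}/t^2$, giving a foothold for (ii) pair by pair.

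The delicate step --- and the source of the $\sqrt{\log t}$ factor --- is controlling all $\binom{t}{2}$ pair-failure events simultaneously via a union bound. The probability that a specific pair is not joined by any edge decays roughly as $\exp(-\Omega(\mindeg(G)^2/t^2))$; forcing this below $1/\binom{t}{2}$ requires $\mindeg(G)^2/t^2 \geq \Omega(\log t)$, i.e.\ $\mindeg(G) = \Omega(t\sqrt{\log t})$, matching the claimed order. Kostochka's original argument implements this balance via a greedy branch-set-enlargement phase that rescues pairs initially failing to be joined; Thomason's later refinement tightens the leading constant using eigenvalue bounds on random subgraphs. The main obstacle is precisely this tension: branch sets must be small enough that $t$ of them fit inside $G$, yet large enough to guarantee cross-pair edges with high probability, and the optimal trade-off is forced to land exactly at $\mindeg(G) \sim t\sqrt{\log t}$.
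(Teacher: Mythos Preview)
The paper does not prove Proposition~\ref{p:kost}; it is quoted from the literature with citations to Kostochka, Thomason, and Diestel's textbook, and is used only as a black box in the proof of Theorem~\ref{th1}. There is therefore no ``paper's own proof'' to compare your proposal against.

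On the merits of the sketch itself: the heuristic is in the right spirit and correctly identifies why the threshold sits at order $t\sqrt{\log t}$, but as a proof it has genuine gaps. First, having every vertex of $\induced{G}{V_i}$ retain roughly $\mindeg(G)/t$ neighbours inside $V_i$ does not by itself force $\induced{G}{V_i}$ to be connected, nor to have a single large component usable as a branch set --- take $G$ to be a disjoint union of large cliques. One needs a preliminary reduction to a highly connected subgraph (via Mader's theorem or an analogous step), and that step is absent. Second, the claimed bound $\exp(-\Omega(\mindeg(G)^2/t^2))$ for the probability that a fixed pair $(V_i,V_j)$ spans no edge is asserted without justification; the indicator events for different edges are correlated through shared endpoints, and the stated exponent does not drop out of a direct calculation. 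The actual arguments in \cite{Kostochka82}, \cite{Thomason01}, and \cite{DIE10B} are substantially more delicate than a one-shot random partition plus union bound; your closing paragraph acknowledges as much but does not supply the missing mechanism.
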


\section{Excluding packings of planar graphs}
\label{sec:results}

Theorems~\ref{first} and~\ref{second} follow combining the two
following results with the machinery introduced
in~\cite{FominST11stre} (see also~\cite{FominLMPS13quad}). They have
independent interest as they detect cases of Theorem~\ref{classic}
where $g$ depends polynomially on $k$.

\begin{theorem} 
\label{th1} Let $H$ be a graph of pathwidth at most 2 on $r > 5$
vertices. If $G$ does not contain $k$ disjoint copies of $H$ as minors then
$\tw(G)\leq 2^{O(r^2)}\cdot k^{2}\cdot \log 2k$.
\end{theorem}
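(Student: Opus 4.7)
The plan is to argue by contrapositive: assume $\tw(G) > 2^{O(r^2)} k^2 \log 2k$ and construct $k$ vertex-disjoint minor models of $H$ in $G$.

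The first step is to replace $H$ by the canonical pathwidth-2 graph $\sg{s}$. Since $H$ has pathwidth at most 2 on $r$ vertices, one can show that $H$ is a minor of $\sg{s}$ for some $s = 2^{O(r^2)}$: the graph $\sg{s}$ is itself of pathwidth~$2$ (witness: the bags $\{x_i,y_i,z_i\}$, $\{x_i,z_i,x_{i+1}\}$, $\{x_{i+1},z_i,z_{i+1}\}$ read in order), and is ``minor-universal'' for sufficiently uniform pathwidth-$\leq 2$ structure. The exponential overhead accounts for the many combinatorial ways of realizing an arbitrary $r$-vertex pathwidth-$\leq 2$ graph inside such a ladder. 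Hence it suffices to produce $k$ vertex-disjoint minor models of $\sg{s}$ in $G$.

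Next, I apply Proposition~\ref{p:mesh}: choosing positive integers $p,q$ with $p+q-1$ matching the assumed treewidth bound, there is a $q$-mesh $(A,B)$ of order $p$ in $G$. I would split $p = \Theta(k^2 s)$ and $q = \Theta(ks \sqrt{\log(ks)})$ so that $p+q = 2^{O(r^2)} k^2 \log 2k$ after absorbing the $s$-dependence. The mesh delivers a ternary tree $T \subseteq G[A]$ whose degree-$\leq 2$ vertices cover the $p$-element set $A\cap B$, together with the external $q$-connectivity of $A\cap B$ in $B$. I then decompose $T$ using Proposition~\ref{p:tree_cut} with $X = A\cap B$ to obtain $k$ vertex-disjoint subtrees $T_1,\dots,T_k$, each containing at least $\Omega(s)$ vertices of $A\cap B$. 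Each subtree will host one copy of $\sg{s}$: pick within $T_i$ a long enough sub-path to play the role of one rail (say the $x$-path of $\sg{s}$), and route through $B$ the second rail (the $z$-path) and the rungs, anchored on the $A\cap B$-vertices of $T_i$. The external $q$-connectivity supplies these paths; since the $k$ copies collectively require $\Theta(ks)$ paths, the choice $q = \Omega(ks \sqrt{\log(ks)})$ is needed, with the $\sqrt{\log}$ factor entering via a Kostochka--Thomason type density estimate (Proposition~\ref{p:kost}) applied to an auxiliary ``linkage'' multigraph on $A\cap B$ whose edges encode internally disjoint paths in $B$.

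The main obstacle will be this last construction step. A ternary subtree does not natively admit the two-rail structure of $\sg{s}$, and the external paths for different copies must be simultaneously disjoint in $B$, not merely within one copy. The remedy I envision is a sequential carving argument: build the copies one by one, each time removing the used interiors and verifying that the residual external connectivity of $A \cap B$ in the remaining part of $B$ is still sufficient for the next copy. Proposition~\ref{p:es} (Erd\H{o}s--Szekeres) can further be invoked to align the attachment points along each subtree in a monotone order so that the rungs of $\sg{s}$ are realized without crossings. Combining the $k$ disjoint $\sg{s}$-models with the reduction $H \lminor \sg{s}$ produces the required $k$ disjoint minor models of $H$, yielding the desired contradiction.
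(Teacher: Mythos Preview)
Your proposal has two genuine gaps, and both stem from a misunderstanding of where the $2^{O(r^2)}$ factor actually originates.

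First, the reduction $H \lminor \sg{s}$ does \emph{not} need $s = 2^{O(r^2)}$; Lemma~\ref{l:pw2sg} gives $H \lminor \sg{r}$ with $s = r$ linear in~$|V(H)|$. The exponential enters elsewhere. In the paper's argument one builds copies of~$\sg{r}$, not of~$\sg{2^{O(r^2)}}$, and the blow-up is the price for turning a ternary tree into a usable path: a ternary tree on $3\cdot 2^{r(r-2)/2}$ marked leaves has diameter at least $(r-1)^2+1$ (Lemma~\ref{l:path-tree}), and then Erd\H{o}s--Szekeres on $(r-1)^2+1$ rung endpoints yields a monotone subsequence of length~$r$. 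Your accounting places the cost at the wrong step and leaves the later steps under-resourced.

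Second, and more seriously, your plan to host an entire $\sg{s}$ in a \emph{single} subtree~$T_i$ cannot work as stated. You propose to take one rail as a subpath of $T_i$ and ``route through $B$ the second rail and the rungs''. But the $q$-mesh only guarantees externally disjoint paths in~$B$ between prescribed subsets of $A\cap B$; it gives no path structure \emph{within}~$B$ that could serve as the $z$-rail, nor any way to attach rungs to such a rail. The paper's construction instead uses \emph{two} subtrees per copy of~$\sg{r}$: it first performs a case split on the contraction degeneracy~$\degec(G)$ (the high-degeneracy case is where Proposition~\ref{p:kost} is actually used, directly on~$G$, to find $K_{3kr}$), and in the low-degeneracy case applies Lemma~\ref{l:mesh-plus}, which via the auxiliary multigraph and Lemma~\ref{l:independent} produces \emph{paired} trees $(T_i, T_{2i})$ joined by $r_0$ disjoint paths in~$B$. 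The two trees supply the two rails, the $B$-paths supply the rungs, and Erd\H{o}s--Szekeres aligns them. Your outline lacks both the degeneracy dichotomy and the pairing mechanism, and the sequential carving you propose does not substitute for either.
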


\begin{theorem} 
\label{th2} For every positive integer $r$, if $G$ does not contain
$k$ disjoint copies of $K_{2,r}$ as a minors then $\tw(G) < 20k^2r^2 -
8 k^2r + 2r - 1$.
\end{theorem}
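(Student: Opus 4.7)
The plan is to prove the contrapositive: assuming $\tw(G) \geq 20k^2 r^2 - 8 k^2 r + 2r - 1$, I will extract $k$ pairwise vertex-disjoint minor models of $K_{2,r}$ in $G$. The starting point is Proposition~\ref{p:mesh} applied with $q = 2r$ and $p = 20 k^2 r^2 - 8 k^2 r$, so that $p + q - 1$ matches the treewidth lower bound: this yields a $(2r)$-mesh $(A,B)$ of order $p$ in $G$, together with a ternary tree $T \subseteq \induced{G}{A}$ such that every vertex of $A \cap B$ lies on $T$ with degree at most $2$, $|A \cap B| = p$, and $A \cap B$ is externally $(2r)$-connected in $B$. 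Each targeted $K_{2,r}$-minor model will be assembled from two vertex-disjoint subtrees of $T$ (whose contractions furnish the two degree-$r$ branch sets of $K_{2,r}$) together with $r$ internally vertex-disjoint paths in $B$ linking their boundaries (providing the $r$ degree-$2$ branch sets).

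The second step is to apply Proposition~\ref{p:tree_cut} to $T$ with $X = A \cap B$ in order to cut $T$ into many pairwise vertex-disjoint subtrees, each carrying a prescribed number of boundary vertices. Since $p$ is quadratic in $kr$, there is considerable slack: one can extract many more subtrees than the $2k$ strictly needed to host $k$ pairs of hubs, and each chosen subtree can still be endowed with a large surplus of boundary vertices. These extra degrees of freedom form the budget spent in coordinating the path-routing step that follows.

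The main step is the simultaneous routing, inside $B$, of $k$ disjoint systems of $r$ internally vertex-disjoint paths, one system per chosen pair of hub subtrees. Since external $(2r)$-connectivity natively supplies only $r$ disjoint paths between a single pair of $r$-subsets of $A \cap B$, the $k$ systems must be coordinated to avoid collisions in the interior of $B$. My proposed strategy is to linearly order the boundary vertices along a depth-first traversal of $T$, so that the chosen subtrees correspond to consecutive intervals; then use external connectivity once globally between a ``left half'' and a ``right half'' of the boundary to produce a large collection of candidate disjoint paths; and finally apply the Erd\H{o}s--Szekeres theorem (Proposition~\ref{p:es}) to extract a monotonic sub-matching among them. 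A monotonic sub-matching of length $kr$ will cluster into $k$ consecutive blocks of $r$ paths, each block bundled between a single pair of subtrees on either side.

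The technical heart, and the main obstacle, is precisely this coordination: external connectivity is inherently a statement about a single pair of subsets, whereas we require path-disjointness across $k$ pairs simultaneously, consistently with the subtree partition of $A \cap B$. The quadratic overhead in $p = 20k^2 r^2 - 8 k^2 r$ is exactly the price paid for the Erd\H{o}s--Szekeres extraction (which loses a square root) combined with the tree-cut step (linear in the number of subtrees times the boundary needed per subtree). Once the $k$ disjoint path systems are produced, the $k$ pairwise vertex-disjoint $K_{2,r}$-minor models assemble immediately by taking, for each pair, the two hub subtrees together with their $r$ connecting paths.
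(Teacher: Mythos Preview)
Your choice of parameters in the application of Proposition~\ref{p:mesh} is the fatal gap. Taking $q = 2r$ means the mesh $(A,B)$ has its boundary $A\cap B$ only externally $(2r)$-connected in $B$: between \emph{any} two disjoint subsets of $A\cap B$ one can route at most $2r$ pairwise vertex-disjoint paths through $B\setminus A$, irrespective of how large $|A\cap B| = p \approx 20k^{2}r^{2}$ is. Hence your ``global'' call to external connectivity produces at most $2r$ candidate paths, not a large collection, and Erd\H{o}s--Szekeres applied to $2r$ elements yields a monotone submatching of length at most about $\sqrt{2r}$, far short of the $kr$ you claim. Since the $k$ models of $K_{2,r}$ together need $kr$ pairwise disjoint paths in $B$, your scheme cannot output more than two models once $k>2$. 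The quadratic budget you identify sits in the \emph{order} of the mesh, but that controls only the size of the boundary, not how many disjoint paths the $B$-side can carry.

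The paper inverts the roles: it works with a mesh of external connectivity $4k^{2}r^{2}$ and comparable order, so that a single global routing genuinely yields $\Theta(k^{2}r^{2})$ disjoint paths between two halves of $8k^{2}r$ subtrees. The difficulty then becomes distributing these paths so that $k$ disjoint pairs of subtrees each receive $r$ of them, and this is where the argument you omit enters. The paper splits on the contraction degeneracy $\degec(G)$: if $\degec(G)\geq 2kr$, Lemma~\ref{l:big-degec} gives $k$ disjoint $K_{2,r}$-minors directly; if $\degec(G)<2kr$, the auxiliary bipartite multigraph on the subtrees (edges $=$ paths) is a minor of $G$, so it inherits degeneracy below $2kr$, and a counting argument (Lemma~\ref{l:independent}) extracts $k$ vertex-disjoint multiedges of multiplicity at least $r$, which unpack to the models (this is packaged as Lemma~\ref{l:mesh-plus}). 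Erd\H{o}s--Szekeres is not used here at all; in the paper it appears only in Theorem~\ref{th1}, and there it is applied \emph{within} each already-matched pair of subtrees to align the paths into a $\Xi_{r}$, not to perform the matching across pairs.
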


\subsection{Auxiliary results}
\label{sec:lemmas}

\begin{lemma} 
\label{l:good_mesh}
Let~$G$ be a graph and let $p \geq q \geq 1$ be integers. If ${\tw(G) \geq 5 p q - 2 q + 2p - 1}$, then there exist~$2q$ disjoint sets $X_1, \dots, X_{2q}$ of~$\vertices{G}$ and a set~$\mathcal{P}$ of~$pq$ disjoint paths in~$G$ of length at least 2 and such that
\begin{enumerate}[(i)]
\item \label{l:good_mesh:i}$\forall i \in \intv{1}{2q}$, $X_i$ is of size~$p$ and is connected in~$G$ by a tree~$T_i$ using the elements of some set~${A \subseteq \vertices{G}}$;
\item \label{l:good_mesh:ii} any path in~$\mathcal{P}$ has one of its ends in some~$X_i$ with $i \in \intv{1}{q}$, the other end in some $X_j$ with $j \in \intv{q + 1}{2 q}$ and its internal vertices are in none of the~$X_l$, for all $l \in \intv{1}{2q}$, nor in~$A$.
\item \label{l:good_mesh:iii} $\forall i,j \in \intv{1}{2k},\ i \neq j \Rightarrow T_i \cap T_j = \emptyset$
\end{enumerate}
\end{lemma}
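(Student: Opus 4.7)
The plan is to extract from the treewidth hypothesis a single $(pq)$-mesh, rather than the more natural $q$-mesh, so that the $pq$ disjoint paths of $\mathcal{P}$ all drop out of one application of external connectivity. The $2q$ sets $X_i$ together with their connecting trees $T_i$ are then carved out of the ternary tree of this mesh using Proposition~\ref{p:tree_cut}.

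Concretely, I would set $s := (2q+1)(2p-1) = 4pq + 2p - 2q - 1$ and apply Proposition~\ref{p:mesh} with $pq$ playing the role of its $q$ and $s$ that of its $p$. Since
\[
\tw(G) \;\geq\; 5pq - 2q + 2p - 1 \;=\; (s + pq - 1) + 1,
\]
we obtain a $(pq)$-mesh $(A_0, B_0)$ of order $s$, together with a ternary tree $T \subseteq G[A_0]$ whose vertex set contains $A_0 \cap B_0$. Feeding $T$, the marked set $A_0 \cap B_0$, and the threshold $p$ to Proposition~\ref{p:tree_cut} returns $\lfloor s/(2p-1)\rfloor - 1 = 2q$ pairwise vertex-disjoint subtrees of $T$, each containing at least $p$ elements of $A_0 \cap B_0$. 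In each such subtree I select exactly $p$ vertices of $A_0 \cap B_0$, call this set $X_i$, and let $T_i$ be a minimal subtree of $T$ spanning $X_i$; setting $A := \bigcup_{i=1}^{2q} V(T_i) \subseteq A_0$ immediately verifies conditions~(\ref{l:good_mesh:i}) and~(\ref{l:good_mesh:iii}).

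To produce $\mathcal{P}$, I take $X := \bigcup_{i \leq q} X_i$ and $Y := \bigcup_{i > q} X_i$, two disjoint $pq$-subsets of $A_0 \cap B_0$, and apply the external $(pq)$-connectivity of $A_0 \cap B_0$ in $B_0$ to the pair $(X,Y)$. This yields $pq$ pairwise vertex-disjoint paths in $B_0$ from $X$ to $Y$ whose internal vertices and edges avoid $G[A_0 \cap B_0]$. Because these paths live in $B_0$, their internal vertices sit in $B_0 \setminus A_0$, which is disjoint from $A \subseteq A_0$ and from every $X_l \subseteq A_0$, so property~(\ref{l:good_mesh:ii}) follows. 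The length-at-least-two requirement is automatic: a length-one path would consist of a single edge inside $G[A_0 \cap B_0]$, which is explicitly forbidden by external connectivity.

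The principal obstacle is the apparent mismatch between the $q$-connectivity naturally suggested by the statement and the $pq$ disjoint paths that must be produced. Using only a $q$-mesh would force iterating external connectivity over $p$ successive $q$-subsets, but nothing in the definition prevents later paths from reusing internal vertices of earlier ones, so iteration collapses. Promoting the mesh's connectivity parameter all the way to $pq$ bypasses the issue in a single stroke, at the modest cost of an additive $pq$ in the treewidth demand, which is comfortably absorbed by the bound $5pq - 2q + 2p - 1$ supplied in the hypothesis.
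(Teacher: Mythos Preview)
Your proposal is correct and follows essentially the same approach as the paper: invoke Proposition~\ref{p:mesh} to obtain a $(pq)$-mesh of order $(2q+1)(2p-1)$, split the ternary tree into $2q$ disjoint subtrees via Proposition~\ref{p:tree_cut}, select the $X_i$'s inside them, and then apply external $(pq)$-connectivity once to the two halves $\bigcup_{i\leq q}X_i$ and $\bigcup_{i>q}X_i$. Your write-up is in fact slightly more careful than the paper's---you make the arithmetic $s+pq-1$ explicit, you take $A$ to be only $\bigcup_i V(T_i)$ rather than the whole mesh side, and you spell out why the external-connectivity edge condition forces length at least~$2$---but the structure of the argument is identical.
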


\begin{proof}
Let~$G$ be a graph, $p \geq q \geq 1$ two integers and assume that ${\tw(G) \geq 5 p q - 2 q +2p - 1}$.
According to Proposition~\ref{p:mesh},~$G$ contains a~$(pq)$-mesh of order~$(2p - 1)(2q + 1)$. Let~$(A,B)$ be this mesh, $X = A \cap B$ and let $T$ be the tree related to~$A$. By definition of a mesh, $T$ is a tree of maximum degree 3 and~$X \subseteq \vertices{T}$.

Using Proposition~\ref{p:tree_cut}, there exist $\floor{\frac{|X|}{2p - 1}} - 1 = 2q$ disjoint subtrees $T_1,\dots T_{2q}$ of $\vertices{T}$ such that for all $i \in \intv{1}{2q},\ \card{\vertices{T_i} \cap X} \geq p$. For all $i \in \intv{1}{2q}$, let~$X_i$ be a subset of~$\vertices{T_i} \cap X$ such that~$\card{X_i} = p$.

The set~$X$ is externally~$(pq)$-connected in~$B$ (by definition of a mesh), \ie~any two subsets of~$X$ of size~$pq$ are linked by~$pq$ disjoint paths whose internally vertices are in~$B$. Thus, the sets $Z_1 = \bigcup_{i \in \intv{1}{q}} X_i$ and $Z_2 = \bigcup_{i \in \intv{q+ 1}{2q}} X_i$ (whose each is of size~$pq$) are externally connected in~$B$. Let~$\mathcal{P}$ be these~$pq$ paths between~$Z_1$ and~$Z_2$.
We now check the conditions (\ref{l:good_mesh:i}), (\ref{l:good_mesh:ii}) and (\ref{l:good_mesh:iii}) on $\{X_i\}_{i \in \intv{1}{2q}}$ and~$\mathcal{P}$.

\begin{enumerate}
\item[(\ref{l:good_mesh:i})] by definition of $\{X_i\}_{i \in \intv{1}{2q}}$, for all $i \in \intv{1}{2q}$, $\card{X_i} = p$ and~$X_i$ belongs to~$\vertices{T_i}$, therefore~$X_i$ is connected in~$G$ by the tree~$T_i$;
\item[(\ref{l:good_mesh:ii})] $\mathcal{P}$ contains disjoint paths
  such that
  \begin{itemize}
  \item they do not use elements of~$A$ (by definition);
  \item they are external to~$Z_1$ and~$Z_2$ (\ie~none of their
    internal vertices belongs to~$X_i$, for all~$i \in \intv{1}{2q}$);
  \item any~$p \in \mathcal{P}$ links~$Z_1$ to~$Z_2$, thus~$p$ have
    one end in~$Z_1$ and the other end in~$Z_2$, put another way~$p$
    have one end in some~$X_i$ for $i \in \intv{1}{2q}$ and the other
    end in some~$X_j$ for some~$j \in \intv{q+1}{2q}$.
  \end{itemize}
\item[(\ref{l:good_mesh:iii})] by definition the~$T_i$'s are all disjoint.
\end{enumerate}

The sets $\{X_i\}_{i \in \intv{1}{2q}}$ satisfies the properties (\ref{l:good_mesh:i}), (\ref{l:good_mesh:ii}) and (\ref{l:good_mesh:iii}) so we found these sets we were looking for.
\end{proof}

\begin{lemma} 
  \label{l:smalldeg}
  For any integer~$a \geq 1$ and for any graph $G$, $\vertices{G}$ contains more than $(1-\frac{1}{a}) \card{\vertices{G}}$ vertices of degree strictly less than~$2a\dege{G}$. In particular, $\vertices{G}$ contains at least $\frac{\card{\vertices{G}}}{2}$ vertices of degree strictly less than~$\dege(G)$.
\end{lemma}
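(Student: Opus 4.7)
The plan is to combine two elementary ingredients: an upper bound on the number of edges of a graph in terms of its degeneracy, and a Markov-type averaging argument over the degree sequence.

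First, I would establish the auxiliary inequality $\card{\edges{G}} \leq \dege(G)\cdot \card{\vertices{G}}$. Let $d=\dege(G)$ and $n=\card{\vertices{G}}$. By definition of $\dege$, every subgraph of $G$ has a vertex of degree at most $d$, so one can build an elimination ordering $v_1,v_2,\dots,v_n$ of $\vertices{G}$ by repeatedly extracting a minimum-degree vertex of the subgraph of $G$ induced by the as-yet-uneliminated vertices. In such an ordering each $v_i$ has at most $d$ neighbours in $\{v_{i+1},\dots,v_n\}$, and summing these contributions counts each edge of $G$ exactly once, yielding the edge bound.

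Next, I would combine this bound with the handshake lemma to obtain $\sum_{v\in \vertices{G}} \dg_G(v) = 2\card{\edges{G}} \leq 2d\cdot n$. Let $B\subseteq \vertices{G}$ be the set of vertices with $\dg_G(v)\geq 2ad$. Summing only over $B$ gives $2ad\cdot \card{B} \leq \sum_{v\in \vertices{G}}\dg_G(v) \leq 2d\cdot n$, hence $\card{B}\leq n/a$. Consequently strictly more than $\left(1-\tfrac{1}{a}\right)n$ vertices of $G$ have degree strictly less than $2a\dege(G)$, which is the desired general claim. The ``in particular'' assertion then follows by specializing the general bound to $a=2$.

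There is no real obstacle in this argument: the whole content sits in the edge-count inequality coming from the greedy degeneracy ordering, and everything else is averaging. The only subtlety to handle carefully is the strictness of the two inequalities, but the strict bound $2ad\cdot\card{B}\leq 2d\cdot n$ translates into $\card{B}\leq n/a$ and, passing to the complement, into a strict inequality for the number of low-degree vertices, as required.
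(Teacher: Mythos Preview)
Your approach matches the paper's almost exactly: both bound $\sum_v \dg_G(v)$ by $2\dege(G)\cdot|V(G)|$ (you via an elimination ordering, the paper by quoting $\ad(G)<2\dege(G)$) and then count high-degree vertices by a Markov-type inequality. Two small repairs are needed, however.

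First, your handling of strictness does not go through as written: from $2ad\,|B|\le 2dn$ you obtain only $|B|\le n/a$, hence at least $(1-\tfrac{1}{a})n$ low-degree vertices, not \emph{strictly more}. The fix is already implicit in your ordering argument: the last vertex $v_n$ has forward degree $0$, so in fact $|E(G)|\le d(n-1)<dn$ whenever $d\ge 1$, and the whole chain becomes strict. (The paper obtains the same strictness by asserting $\ad(G)<2\dege(G)$, which is exactly this observation.)

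Second, the ``in particular'' clause does not follow by setting $a=2$: that specialisation gives more than $n/2$ vertices of degree strictly less than $4\dege(G)$, not $\dege(G)$. Indeed the clause as printed is false (any $d$-regular graph with $d\ge 1$ has no vertex of degree $<\dege(G)=d$), so this is a defect in the lemma's statement rather than in your argument; you should simply not claim to derive it.
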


\begin{proof}
Let~$a \geq 1$ be an integer and let~$G$ be a graph.

Let~$n_h$ be the number of vertices of~$G$ with degree at least $h = 2 a \times \dege(G)$, \ie~$n_h = \card{\{v \in \vertices{G},\ \dg(v) \geq h\}}$ and~$n_{-h}$ the number of vertices of degree strictly less than~$h$, \ie~$n_{-h} = \card{\vertices{G}} - n_h$.
Clearly, there is at least~$\half h n_h$ edges incident the~$n_h$ vertices of degree at least~$h$. We thus have:

\begin{align*}
  \half h n_h &\leq \card{\edges{G}} & \text{(because there may be other edges)}\\
  & \leq \half \sum_{v \in \vertices{G}} \dg(v) & \text{(Handshaking lemma)}\\
  \frac{h n_h}{\card{\vertices{G}}} & \leq \frac{\sum_{v \in \vertices{G}} \dg(v)}{\card{\vertices{G}}} &\\
  & < 2 \dege(G) & \paren{\text{because}\ \frac{\sum_{v \in \vertices{G}} \dg(v)}{\card{\vertices{G}}} = \ad(G) < 2 \dege(G)}\\
  n_h & < \card{\vertices{G}} \frac{2 \dege(G)}{h} & \\
  n_{-h} & > \card{\vertices{G}} \paren{1 - \frac{2 \dege(G) }{h}} &\\
  & > \card{\vertices{G}} \paren{1 - \frac{1}{a}} & \text{(by replacing~$h$ by its value)}
\end{align*}
Finally, we found that~$G$ contains more than $\card{\vertices{G}} \paren{1 - \frac{1}{a}}$ vertices of degree strictly less than $2 a \times \dege(G)$, what we wanted to prove.
\end{proof}

\begin{lemma} 
\label{l:big-degec}
  Let~$k, r$ be two positive integers and~$G$ a graph such that~${\degec(G) \geq 2kr}$. Then~$G$ contains~$k$ disjoint copies of~$\K{2,r}$ as minors.
\end{lemma}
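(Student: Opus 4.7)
The plan is to use the hypothesis on contraction degeneracy to extract a minor $G'$ of $G$ with large minimum degree, then split $\vertices{G'}$ into $k$ pieces each still carrying enough minimum degree to force a $K_{2,r}$ minor, and finally lift these disjoint minors from $G'$ back to $G$.

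First, by the very definition of $\degec$, there exists a minor $G'$ of $G$ with $\mindeg(G') \geq 2kr$. I would then apply Proposition~\ref{p:partition} to $G'$ with parameter $k$, obtaining a partition $V_1, \dots, V_k$ of $\vertices{G'}$ such that
\[
\mindeg(\induced{G'}{V_i}) \geq \frac{2kr}{k} - 1 = 2r - 1 \qquad \text{for every } i \in \intv{1}{k}.
\]

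Next, I would observe that minimum degree is a lower bound on treewidth, so $\tw(\induced{G'}{V_i}) \geq 2r - 1 \geq 2r - 2$ for every $i$. By the contrapositive of Proposition~\ref{p:twk2r}, each $\induced{G'}{V_i}$ contains $\K{2,r}$ as a minor. Because the $V_i$'s are pairwise disjoint, these $k$ minor models of $\K{2,r}$ live on pairwise disjoint vertex sets of $G'$, and thus form $k$ disjoint copies of $\K{2,r}$ as minors of $G'$.

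Finally, I would lift the family from $G'$ to $G$: since $G'$ is a minor of $G$, each vertex $v \in \vertices{G'}$ corresponds to a connected branch set $B_v \subseteq \vertices{G}$, and the branch sets of distinct vertices of $G'$ are disjoint in $G$. Replacing every vertex used in one of the $k$ disjoint $\K{2,r}$ models in $G'$ by its branch set in $G$ yields $k$ pairwise disjoint $\K{2,r}$ minors of $G$. There is no real obstacle here: the only point to verify carefully is that the Stiebitz partition leaves enough minimum degree, namely $2r-1$, in each block, which is exactly tight for applying Proposition~\ref{p:twk2r}; the minor-lifting step is standard bookkeeping on branch sets.
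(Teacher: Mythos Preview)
Your proof is correct and follows essentially the same approach as the paper's: extract a minor $G'$ of minimum degree at least $2kr$, apply Stiebitz's partition theorem (Proposition~\ref{p:partition}) to split $\vertices{G'}$ into $k$ parts each of minimum degree at least $2r-1$, use $\mindeg \leq \tw$ together with Proposition~\ref{p:twk2r} to find a $\K{2,r}$ minor in each part, and then lift the disjoint minors from $G'$ back to $G$. The only difference is that you spell out the branch-set lifting more explicitly than the paper does.
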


\begin{proof}
Let~$k, r$ be two positive integers and~$G$ a graph of contraction degeneracy at least~$2kr$. Then~$G$ has a minor~$G'$ such that $\mindeg(G') \geq 2kr$.

According to Proposition~\ref{p:partition}, there is a partition $\mathcal{V} = \{V_1,\dots, V_k\}$ of~$\vertices{G'}$ such that
\[
  \forall V_i \in \mathcal{V},\ \mindeg(\induced{G'}{V_i}) \geq \frac{2kr}{k} - 1 = 2r - 1
\]

The minimum degree of a graph is a lower bound for its treewidth, then any $V_i \in \mathcal{V}$ has treewidth at least~$2r-1$, and thus by Proposition~\ref{p:twk2r} $V_i$ contains~$\K{2,r}$ as a minor.
$\mathcal{V}$ is a partition of size~$k$ of~$\vertices{G'}$ and each element of~$\mathcal{V}$ contains~$\K{2,r}$ as a minor consequently $G'$ contains~$k$ disjoint copies of~$\K{2,r}$ as minors. As~$G'$ is a minor of~$G$, $G$ contains~$k$ disjoint copies of~$\K{2,r}$ as minors, what we wanted to show.

\end{proof}

\begin{lemma} 
\label{l:path-tree}
Let~$T$ be a ternary tree and $X = \{v \in \vertices{T},\ \dg_T(v)
\leq 2\}$. Then
\begin{enumerate}[(i)]
\item for any path $P$ on $l$ vertices in~$T$, $T$ has a partition
  $\mathcal{M}$ such that
  \begin{enumerate}[a)]
  \item every vertex of $P$ belongs to a different element of
    $\mathcal{M};$
  \item every element of $\mathcal{M}$ contains an element of $X.$
  \end{enumerate}
  \label{l9:i}
\item $T$ has diameter at least $2 \log \frac{2}{3}\card{X}$\label{l9:ii}.
\end{enumerate}
\end{lemma}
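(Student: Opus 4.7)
For part (i), my plan is to build the partition by decomposing $T$ around the path $P = v_1 v_2 \dots v_l$. Removing $\vertices{P}$ from $T$ yields a forest, and since $T$ is a tree, each connected component $S$ of this forest is attached to $P$ through exactly one edge and hangs off a unique vertex $v_{i(S)}$ of $P$. Grouping each such pendant subtree $S$ with $v_{i(S)}$ produces a partition $\mathcal{M} = \{C_1, \dots, C_l\}$ of $\vertices{T}$ into $l$ subtrees of $T$, and every vertex of $P$ lies in a distinct class, giving condition~(a). For condition~(b), if $C_i = \{v_i\}$ then the only edges incident to $v_i$ in $T$ lie on $P$, whence $\dg_T(v_i) \leq 2$ and $v_i \in X$; otherwise, some pendant subtree $S \subseteq C_i$ is nonempty, and any leaf of $S$ distinct from the vertex of $S$ linking it to $v_i$ has degree $1$ in $T$ and hence lies in $X$.

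For part (ii), I plan to prove the equivalent bound $\card{X} \leq \frac{3}{2} \cdot 2^{d/2}$, where $d$ is the diameter of $T$, by rooting $T$ at a centre. When $d$ is even I root at a central vertex; when $d$ is odd I remove the central edge of $T$ and root the two resulting components at its endpoints. In every case, each non-root vertex has at most two children, each root has at most three children in the even case and at most two in the odd case, and the depth is at most $\lceil d/2 \rceil$. Because the $T$-degree of a non-root vertex is its number of children plus one, a non-root vertex belongs to $X$ precisely when it has at most one child; the even-case centre is the only vertex that can belong to $X$ while having two children.

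The combinatorial engine is the function $f(h)$ defined as the maximum number of vertices having at most one child in a rooted tree of depth at most $h$ in which every vertex has at most two children. Case-splitting on the root's out-degree gives $f(h) = \max(1 + f(h-1),\ 2 f(h-1))$ with $f(0) = 1$, which resolves to $f(h) = 2^h$ for $h \geq 1$. Summing $f(R-1)$ over the subtrees hanging below the centre --- three subtrees in the even case with $R = d/2$, two in the odd case with depth $R - 1 = (d-1)/2$ --- and adding at most one for the even-case centre itself, I obtain $\card{X} \leq 3 \cdot 2^{R-1}$, $\card{X} \leq 2^R + 1$, or $\card{X} \leq 2 \cdot 2^{(d-1)/2}$ according to the case. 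Each of these is at most $\frac{3}{2} \cdot 2^{d/2}$: the ``$+1$'' is absorbed because $R \geq 1$, and the odd case is handled by $\sqrt{2} < \frac{3}{2}$. The main obstacle is simply keeping the parity and root-degree cases organised; the bound is tight on the complete ternary tree whose root and all internal vertices have degree three, so no sharper counting is required.
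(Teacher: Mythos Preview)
Your argument for part~(i) is the same construction as the paper's: your classes $C_i$ are exactly the connected components of $T\setminus(P\setminus\{v_i\})$ containing $v_i$. One cosmetic point: when a pendant subtree $S$ consists of a single vertex $w$, there is no ``leaf of $S$ distinct from the linking vertex'', so your sentence does not literally apply; but then $w$ has degree~$1$ in $T$ and lies in $X$, so the conclusion still holds.

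For part~(ii) both proofs rest on the same idea---root near the middle of a longest path and bound $|X|$ by the number of low-branching vertices in a depth-$\lceil d/2\rceil$ binary-branching tree---but the executions differ. The paper first suppresses degree-$2$ vertices by attaching an extra leaf to each (preserving $|X|$ and the diameter), then embeds the resulting tree into the complete ternary tree of the same height and reads off the leaf bound $3\cdot 2^{\lceil d/2\rceil-1}$. You instead keep degree-$2$ vertices, set up the recurrence $f(h)=\max(1+f(h-1),\,2f(h-1))=2^h$ for the number of vertices with at most one child in a depth-$h$ binary-branching rooted tree, and handle the parity of $d$ and the degree of the centre explicitly. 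Your route avoids the auxiliary leaf-adding step and the somewhat informal ``$T'$ contains $T$'' comparison in the paper, at the cost of a small case analysis; either way the final inequality $|X|\le \tfrac{3}{2}\cdot 2^{d/2}$ drops out with equality on the complete ternary tree.
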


\begin{proof}[Proof of~(\ref{l9:i})]
Let $T,$ $X$, $P$ be as in the statement of the lemma.
For every $u \in \vertices{P},$ we set $M_u$ as the set of vertices of
the connected component $G \setminus (P \setminus \{u\})$ that
contains $u.$ Let $\mathcal{M} = \{M_u\}_{u \in P}.$ Clearly, for all
$u,v \in \vertices{P}$, if $u \neq v$ then $M_u \cap M_v = \emptyset.$
Also, since $T$ is connected, there is no vertex of $\vertices{T}$
that is not in an element of $\mathcal.$ Therefore $\mathcal{M}$ is a
partition of $\vertices{T}$. By definition, for every $u \in
\vertices{P},\ u \in M_u$. Besides, every element $M$ of $\mathcal{M}$
contains either exactly one element, which is necessarily a vertex of
degree 2 in $T$, or more than one element ad in this case it induces
in $G$ a tree whose leaves are also leaves of $G$. In both cases $M$
contains an element of $X$ as required.
\end{proof}

\begin{proof}[Proof of~(\ref{l9:ii})]
Let $P = p_0 \dots p_k$ be a longest path in $T$. In order to be able
to use the notions of height and of child, we root $T$ at node
$n_{\floor{\frac{k}{2}}}$ (which is clearly not a leave).

We prove the proposition for the case where~$T$ has no vertices of
degree two. If this is not the case, we can just add a leaf as child
of every vertex of degree two. As these vertices have an other child,
there is at least one longest path that use none of the new vertices.

Let~$\ell = \card{X}.$ By contradiction, assume that $k < 2 \log \frac{2}{3}\ell$.

Let $T'$ be the full ternary tree of height $\ceil{\frac{k'}{2}}.$ As
$T'$ is complete, it has $3 \cdot 2^{\ceil{\frac{k}{2}}-1}$
leaves. The tree $T'$ clearly contains $T$ as subgraph because they have same
height, thus $T'$ has at most as much leaves as $T,$ \ie~$l \leq 3 \cdot
2^{\ceil{\frac{k}{2}}-1}$.  If we use our first assumption, we get:
\begin{align*}
  l & \leq 3 \cdot 2^{\ceil{\frac{k}{2}}-1}\\
   & < 3 \cdot 2^{\ceil{\log \frac{2}{3}\ell}-1}\\
   l < l
\end{align*}
We obtain a contradiction, thus our assumption $k < 2 \log
\frac{2}{3}\ell$ was false: $T$ has diameter at least $2 \log
\frac{2}{3}\card{X}.$
\end{proof}

\begin{lemma} 
\label{l:independent} 
Let~$k,r$ be two positive integers and $G = ((V_1,V_2),E)$ a bipartite multigraph such that
\begin{align*}
\card{V_1} &= \card{V_2} \geq 4k^2r\\
\forall v \in \vertices{G},\ \multidg(v) &= 2kr^2\\
\dege(G) & < 2kr
\end{align*}
Then~$G$ has at least~$k$ (vertex-)disjoint multiedges of multiplicity at least~$r$.
\end{lemma}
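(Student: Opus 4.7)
The plan is a greedy iterative construction: set $G_1 := G$ and, at iteration $i = 1,\dots,k$, find a multi-edge $\{v_i, u_i\}$ of multiplicity at least $r$ in $G_i$ and set $G_{i+1} := G_i \setminus \{v_i, u_i\}$. The $k$ multi-edges produced are vertex-disjoint by construction, so the only thing to verify is that each $G_i$ (for $i \leq k$) contains such a \emph{heavy} multi-edge, \ie~one of multiplicity at least $r$.

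To locate such a multi-edge, I would exhibit at each step a vertex $v \in V(G_i)$ that simultaneously has small simple degree in $G_i$ and current multidegree close to the original $2kr^2$; a pigeonhole argument will then yield an incident multi-edge of multiplicity at least $r$. For the simple degree, since $\dege(G_i) \leq \dege(G) < 2kr$, Lemma~\ref{l:smalldeg} applied to $G_i$ provides a set $A \subseteq V(G_i)$ of size at least $|V(G_i)|/2$ satisfying $\dg_{G_i}(v) < 2kr$.

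For the multidegree, write $\ell(v) := 2kr^2 - \multidg_{G_i}(v)$ for the multiplicity lost by $v$ through removed opposite-side vertices. Summed over $V_1 \cap V(G_i)$, this total is bounded by the sum of the original multidegrees of the at most $i - 1 \leq k-1$ removed $V_2$-vertices, \ie~by $2kr^2(k-1)$. Using $|V_1 \cap V(G_i)| \geq n - k + 1 \geq 3k^2 r$ (which follows from $n \geq 4k^2r$), the average of $\ell$ on $V_1 \cap V(G_i)$ is at most $2(k-1)r/(3k) < 2r/3$, and by symmetry the same holds on $V_2 \cap V(G_i)$. Markov's inequality then shows that fewer than a $1/(3k)$-fraction of each side has $\ell(v) \geq 2kr$, so the set $B := \{v \in V(G_i) : \ell(v) < 2kr\}$ satisfies $|B| > (1 - 1/(3k))|V(G_i)| \geq \tfrac{2}{3}|V(G_i)|$. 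Consequently $|A| + |B| > |V(G_i)|$, so $A \cap B \neq \emptyset$; any $v \in A \cap B$ then satisfies $\multidg_{G_i}(v) > 2kr^2 - 2kr \geq (r-1)(2kr - 1) \geq (r-1)\dg_{G_i}(v)$, and pigeonhole over the $\dg_{G_i}(v)$ neighbours of $v$ in $G_i$ produces an incident multi-edge of multiplicity at least $r$, as required.

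The main obstacle is precisely the bookkeeping of multidegree losses across iterations, and this is where the hypothesis $|V_1| = |V_2| \geq 4k^2 r$ is essential: it keeps the average loss at most a constant times $r$, so that a strict majority of $V(G_i)$ remains ``fresh''; combined with the at-least-half of $V(G_i)$ furnished by Lemma~\ref{l:smalldeg}, this guarantees a suitable low-degree, low-loss vertex, and hence a heavy multi-edge, at every iteration up to $k$.
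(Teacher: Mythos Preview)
Your proof is correct but takes a different route from the paper's. The paper works in one shot, without iteration or loss-tracking: it applies Lemma~\ref{l:smalldeg} once to~$G$ to obtain, on one side (say~$V_1$), a set~$L$ of~$2k^2r$ vertices of simple degree~$<2kr$; by pigeonhole each $v\in L$ has an incident multi-edge of multiplicity~$\geq r$, giving a map $f\colon L\to V_2$ with $\mult(v,f(v))\geq r$. The key observation is that the $2kr^2$-regularity on~$V_2$ bounds each fibre by $\card{f^{-1}(u)}\leq 2kr^2/r=2kr$, so $\card{f(L)}\geq \card{L}/(2kr)=k$; choosing~$k$ distinct images and one preimage of each yields the~$k$ disjoint heavy edges directly. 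In contrast, your greedy argument re-applies Lemma~\ref{l:smalldeg} at every step and needs the Markov bookkeeping to control multidegree losses. The paper's approach is shorter and exploits the exact multidegree regularity on \emph{both} sides (once for the pigeonhole, once for the fibre bound), whereas your argument only really needs a lower bound on the multidegree and would survive a relaxation of that hypothesis; that robustness is the trade-off for the extra work.
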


\begin{proof}
Let~$G$ be a graph that fill the conditions of the lemma.
For $(u,v) \in \edges{G}$, let~$\mult(u,v)$ denote the multiplicity of the edge~$(u,v)$.
According to lemma~\ref{l:smalldeg}, $G$ contains at least $\half \vertices{G} \geq 4k^2r$ vertices of degree strictly less than~$\dege(G) < 2kr$. Then, one of~$V_1,V_2$ contains at least~$2k^2r$ such vertices. We assume without loss of generality that this is~$V_1$. Let~$L$ be a subset of~$V_1$ of size~$2k^2r$ containing vertices of degree strictly less than~$2kr$. For all~$v \in L$, $v$ has degree less than~$2kr$ (by definition of~$L$) and multidegree~$2kr^2$ (by initial assumption) so there is a least one~$u \in V_2$ such that~$\mult(u,v) \geq r$.

We now define an auxiliary function. Let~$f : L \to V_2$ a function such that $\forall v \in L,\ \mult(v, f(v)) \geq r$. According to the previous remark, such a function exists.
For all~$u \in f(L)$, the multidegree of~$u$ is by assumption~$2kr^2$ thus~$u$ cannot be the image of more than $\frac{\multidg(u)}{r} = 2kr$ elements of~$L$. Consequently,~$f(L)$ has size at least $\frac{\card{L}}{2kr} \geq k$. Remark that for all $u_1,u_2 \in f(L)$ with $u_1 \neq u_2$, the preimages of~$u_1$ and~$u_2$ are disjoint.

We finally show~$k$ disjoint multiedges of multiplicity at least~$r$ in~$G$. Choose~$k$ distinct elements $u_1, \dots, u_k$ of~$f(L)$ and for all $i \in \intv{1}{k}$ let~$v_i$ be an element of~$L$ in the preimage of~$u_i$ (\ie~such that~$f(v_i) = u_i$). As said before, the preimages of distinct elements of~$f(L)$ are distinct so the~$v_i$'s are all distinct. By definition $\forall i \in \intv{1}{k}, f(v_i) = u_i$ so there is an edge of multiplicity~$r$ between~$u_i$ and~$v_i$ in~$G$.
Therefore, $\{(v_i, u_i)\}_{i \in \intv{1}{k}}$ is the set of edges we were looking for.

\end{proof}

In~\cite{2013arXiv1305.7112R} we prove the following lemma.
\begin{lemma}[\cite{2013arXiv1305.7112R}]
\label{l:pw2sg}
  For all graph $G$, if $n = \card{\vertices{G}}$, then $\pw(G) \leq 2 \Rightarrow G \lminor \sg{n}$.
\end{lemma}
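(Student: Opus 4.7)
The plan is to prove the lemma by induction on $n=|V(G)|$, with a strengthened hypothesis that controls the placement of specified vertices in the last column of $\sg{n}$. Specifically, I would show that for every graph $G$ of pathwidth at most $2$ on $n$ vertices, every path decomposition of $G$ of width at most $2$, and every subset $B$ of the last bag with $|B|\leq 2$, there is a minor model of $G$ in $\sg{n}$ such that the (at most two) vertices of $B$ are mapped to connected subgraphs meeting the last column at the vertices $x_{n-1}$ and $z_{n-1}$ respectively.

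For the inductive step, I would pick the vertex $v$ introduced last in the path decomposition. Since $v$ lies only in the last bag, its neighbors in $G$ are among the at most two other vertices of that bag. Removing $v$ yields a smaller graph $G-v$ on $n-1$ vertices of pathwidth at most $2$ with an inherited path decomposition; I would apply the induction hypothesis to $G-v$ with the boundary chosen to contain the neighbors of $v$. This produces a minor model of $G-v$ in $\sg{n-1}$ where each neighbor of $v$ meets column $n-2$ at either $x_{n-2}$ or $z_{n-2}$. I would then extend the model to $\sg{n}$ by adjoining column $n-1$: each neighbor of $v$ is extended into this new column via the edge $x_{n-2}x_{n-1}$ or $z_{n-2}z_{n-1}$, and $v$ itself is placed on the remaining vertex of column $n-1$, typically $y_{n-1}$ (or including $x_{n-1}$ or $z_{n-1}$ together with $y_{n-1}$ as a bridge if $v$ is required to sit in the new boundary).

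The main obstacle is that $y_i$ is incident to no horizontal edge of $\sg{n}$, so a model that sits at $y_{n-1}$ cannot be extended to column $n$ without appropriating $x_{n-1}$ or $z_{n-1}$ from an adjacent model. This forces the strengthened invariant to anticipate which vertices of the current bag will still be alive in later bags: vertices with still-open neighborhoods must be placed in the $x$- or $z$-slot so that they can be propagated forward, while only vertices about to be forgotten may occupy the middle slot. A case analysis on the possible transitions of the path decomposition (simple introduce, forget of an old vertex, or a combined introduce/forget that recycles a slot) then shows that a valid placement always exists and that each edge $(u,v)\in E(G)$ is realised by an edge of $\sg{n}$ between $M_u$ and $M_v$. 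The base case $n\leq 3$ is immediate, since $\sg{3}$ already contains every graph on $3$ vertices as a minor.
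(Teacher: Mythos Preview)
The paper does not actually prove this lemma: it only states it and cites \cite{2013arXiv1305.7112R} for the proof. So there is no in-paper argument to compare against, and your proposal must be judged on its own merits.

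Your inductive scheme is the natural one and is essentially correct. Peeling off a vertex $v$ that occurs only in the rightmost bag, recursing on $G-v$ with the boundary chosen as $(B\setminus\{v\})\cup N_G(v)$ (which is contained in the remaining at most two vertices of that bag, so still has size at most~$2$), and then adjoining column $n{-}1$ is exactly how one uses the structure of $\sg{n}$: the two horizontal rails $x_\bullet$ and $z_\bullet$ carry the at most two ``live'' boundary vertices forward, while the middle vertex $y_{n-1}$ hosts whichever vertex is being forgotten. Your identification of the obstacle (no horizontal edge at $y_i$) and the remedy (only vertices about to be forgotten may occupy the middle slot; a boundary vertex that is newly introduced takes $x_{n-1}$ or $z_{n-1}$ together with $y_{n-1}$ as a bridge to reach its neighbours in column $n{-}2$) is right.

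Two small points deserve to be made explicit in a full write-up. First, the strengthened hypothesis should let the caller \emph{choose} which element of $B$ sits at $x_{n-1}$ and which at $z_{n-1}$; otherwise, when $v\in B$ and the other boundary vertex $b$ must land at, say, $z_{n-1}$, you need the recursive call to have placed $b$ at $z_{n-2}$ rather than $x_{n-2}$. Your phrasing ``respectively'' hints at this, but it should be part of the invariant. Second, the existence of a vertex lying only in the last bag follows once you normalise the path decomposition so that no bag is contained in its neighbour; this is standard but worth a sentence.
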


\begin{lemma}\label{l:mesh-plus}
 For all positive integers~$p,q$ and all graph~$G$, if $\tw(G) \geq 20p^2q^2 - 8 p^2q + 2q - 1$ and $\degec(G) < 2pq$ then~$G$ contains~$2p$ disjoint subsets $X_1, \dots, X_{2p}$ of $\vertices{G}$ and a set $\mathcal{P}$ of~$pq$ disjoint paths of length at least 2 in~$G$ such that

\begin{enumerate}[(i)]
\item \label{l:pairs:i}$\forall i \in \intv{1}{2p}$, $X_i$ is of size~$q$ and is connected in~$G$ by a tree~$T_i$ using the elements of some set~$A \subseteq \vertices{G}$;
\item \label{l:pairs:ii} any path in~$\mathcal{P}$ has one of its ends in some~$X_i$ with $i \in \intv{1}{p}$, the other end in~$X_{2i}$ with $j \in \intv{q + 1}{2 p}$ and its internal vertices are in none of the~$X_l$, for all $l \in \intv{1}{2p}$, nor in~$A$;
\item \label{l:pairs:iii}$\forall i,j \in \intv{1}{2p},\ i \neq j \Rightarrow T_i \cap T_j = \emptyset$.
\end{enumerate}
\end{lemma}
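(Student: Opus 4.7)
The plan is to obtain the structure directly from Proposition~\ref{p:mesh} and Proposition~\ref{p:tree_cut}, by choosing the mesh parameters so that its tree contains enough degree-at-most-$2$ vertices to supply the $2p$ sets $X_i$, while its external connectivity is large enough to supply the $pq$ paths. The key observation is the factorisation
\[20p^2q^2 - 8p^2q + 2q - 1 = p'' + q'' - 1,\]
where $p'' = (2q-1)(8p^2q + 1) = 16p^2q^2 - 8p^2q + 2q - 1$ and $q'' = 4p^2q^2 + 1$. Since $p'' - q'' = 2(6p^2q^2 - 4p^2q + q - 1) \geq 0$ for all $p, q \geq 1$, the hypothesis $p'' \geq q'' \geq 1$ of Proposition~\ref{p:mesh} is satisfied, so $G$ contains a $q''$-mesh $(A, B)$ of order $p''$. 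Let $T$ be the ternary tree in $\induced{G}{A}$ supplied by the mesh and $X = A \cap B \subseteq \vertices{T}$, so that $\card{X} = p''$ and $X$ is externally $q''$-connected in $B$.

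Assuming $q \geq 2$ (the corner case $q = 1$ is handled by picking $2p$ arbitrary vertices of $X$ as singleton sets $X_i = T_i$), I would next apply Proposition~\ref{p:tree_cut} to $T$ with parameter $k = q$. Since $p''/(2q - 1) = 8p^2q + 1$ exactly, this yields $8p^2q$ vertex-disjoint subtrees $S_1, \ldots, S_{8p^2q}$ of $T$, each meeting $X$ in at least $q$ vertices. As $8p^2q \geq 2p$ for all $p, q \geq 1$, I keep $2p$ of these subtrees, select for each $i \in \intv{1}{2p}$ a subset $X_i \subseteq \vertices{S_i} \cap X$ of size exactly $q$, and set $T_i = S_i$.

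Finally, let $Z_1 = \bigcup_{i=1}^{p} X_i$ and $Z_2 = \bigcup_{i=p+1}^{2p} X_i$. Both have size $pq$, and $pq \leq q''$ since $pq \leq 4p^2q^2 < q''$. The external $q''$-connectivity of $X$ in $B$ therefore provides a set $\mathcal{P}$ of $pq$ disjoint paths in $B$ from $Z_1$ to $Z_2$ whose internal vertices lie in $B \setminus X = B \setminus A$. Conditions~(\ref{l:pairs:i}) and~(\ref{l:pairs:iii}) follow immediately from $\vertices{T_i} \subseteq A$ and the vertex-disjointness of the $S_i$, while condition~(\ref{l:pairs:ii}) follows from the definition of external connectivity: the internal vertices of paths in $\mathcal{P}$ lie outside $A$, and hence outside every $X_l \subseteq A$.

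The main subtlety is the exact arithmetic matching $p'' + q'' - 1 = 20p^2q^2 - 8p^2q + 2q - 1$ together with the verification of the small inequalities $p'' \geq q''$, $pq \leq q''$, and $8p^2q \geq 2p$, all of which hold for positive integers $p, q$. A curious feature of this plan is that the hypothesis $\degec(G) < 2pq$ does not appear to enter the argument; it is presumably included in the statement because Lemma~\ref{l:mesh-plus} will be invoked in a setting where the complementary case $\degec(G) \geq 2pq$ is already dispatched by Lemma~\ref{l:big-degec}, or else because the authors have in mind a stronger conclusion (\eg, a pair-matching of the $pq$ paths among the $p$ pairs $(X_i, X_{p+i})$) for which the degeneracy hypothesis would be genuinely needed via an appeal to Lemma~\ref{l:independent}.
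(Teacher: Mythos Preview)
Your argument establishes only the weaker conclusion of Lemma~\ref{l:good_mesh}: $pq$ disjoint paths from $Z_{1}=\bigcup_{i\leq p}X_{i}$ to $Z_{2}=\bigcup_{i>p}X_{i}$ collectively. Condition~(\ref{l:pairs:ii}) of Lemma~\ref{l:mesh-plus}, however, demands a \emph{pairing}: each path must run from some $X_{i}$ to its designated partner $X_{2i}$, so that every one of the $p$ pairs receives exactly $q$ paths. Your $pq$ paths form a perfect matching between the vertices of $Z_{1}$ and $Z_{2}$, but nothing prevents, say, all $q$ paths out of $X_{1}$ from landing in $q$ different sets on the other side. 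This pairing is precisely what the applications need (in the proof of Theorem~\ref{th2} each pair is contracted to a $\K{2,r}$; in Theorem~\ref{th1} each pair supports a $\sg{r}$ via Erd\H{o}s--Szekeres), so the gap is fatal rather than cosmetic. You yourself flag this possibility in your final paragraph, but you do not close it.

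The paper's proof closes it exactly where you suspect: it does \emph{not} discard the surplus subtrees. It keeps all $8p^{2}q$ sets $Y_{i}$ and all $4p^{2}q^{2}$ paths, forms the bipartite multigraph $H$ on $\{Y_{i}\}$ whose edge multiplicities record how many paths of $\mathcal{P}$ join each pair, observes that $H$ is a minor of $G$ so that $\dege(H)\leq\degec(G)<2pq$, and then invokes Lemma~\ref{l:independent} (with $k:=p$, $r:=q$, $\card{V_{1}}=\card{V_{2}}=4p^{2}q$, multidegree $2pq^{2}$) to extract $p$ vertex-disjoint multiedges of multiplicity at least $q$. Those $p$ multiedges are the $p$ pairs $(X_{i},X_{2i})$ together with their $q$ paths. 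So the hypothesis $\degec(G)<2pq$ is not decorative; it is the engine that converts the unstructured bundle of paths into the paired structure, and the large parameters $8p^{2}q$ and $4p^{2}q^{2}$ (which your choice of $p'',q''$ correctly reproduces but then squanders) are calibrated precisely to meet the hypotheses of Lemma~\ref{l:independent}.
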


\begin{proof}
  According to lemma~\ref{l:good_mesh},~$G$ contains~$8p^2q$ disjoint sets $Y_1, \dots, Y_{8p^2q}$ of $\vertices{G}$ and a set $\mathcal{P}$ of~$4p^2q^2$ disjoint paths in~$G$ of length at least 2 and such that
\begin{enumerate}[(i)]
\item $\forall i \in \intv{1}{8p^2q}$, $Y_i$ is of size~$q$ and is connected in~$G$ by a tree~$T_i$ using the elements of some set~$A \subseteq \vertices{G}$;
\item any path in $\mathcal{P}$ has one of its ends in some~$Y_i$ with $i \in \intv{1}{4p^2q}$, the other end in some~$Y_j$ with $j \in \intv{4p^2q + 1}{8p^2q}$ and its internal vertices are in none of the~$Y_l$, for all $l \in \intv{1}{8p^2q}$, nor in~$A$;
\item $\forall i,j \in \intv{1}{8p^2q},\ i \neq j \Rightarrow T_i \cap T_j = \emptyset$.
\end{enumerate}

Let us consider the bipartite multigraph~$H$ defined by
\begin{itemize}
\item $\vertices{H} = \{Y_i\}_{i \in \intv{1}{8p^2q}}$ ;
\item for all~$n$ integer and $i,j \in \intv{1}{8p^2q}$ there is an edge of multiplicity~$m$ between the two vertices~$Y_i$ and~$Y_j$ iff there is exactly~$m$ paths from a vertex of~$Y_i$ to a vertex of~$Y_j$ in~$P$.
\end{itemize}

Clearly,~$H$ is a minor of~$G$. Consequently~$2pq > \degec(G) \geq \degec(H) \geq \dege(H)$.

The three conditions required on~$H$ by lemma~\ref{l:independent} are filled, so~$H$ contains~$p$ disjoint multiedges of multiplicity~$q$.

By construction of~$H$, having an edge of multiplicity~$m$ in~$H$ is
equivalent to having~$m$ distinct paths in~$P$ between two sets~$Y_i$
and~$Y_j$, then having~$p$ disjoint multiedges of multiplicity~$q$
in~$H$ is equivalent to having~$p$ disjoint pairs $(X_i, X_{2i})_{i
  \in \intv{1}{p}}$ of elements of $\{Y_i\}_{i\in \intv{1}{4p^2q}}$
and a set $P$ of $pq$ paths that contains~$q$ paths that
links the two elements of each of the $p$ pairs. The set $\{X_i\}_{i
  \in \intv{1}{2p}}$ is thus the one we were looking for.
\end{proof}

\subsection{Proof of Theorem~\ref{th1}}
\label{sec:th1}

\begin{proof}[Proof of theorem~\ref{th1}.]

We prove the contrapositive. Let~$k$ be a integer,~$H$ a graph on~$r>5$
vertices and of pathwidth at most 2 and~$G$ a graph.
From Proposition~\ref{l:pw2sg},~$H\lminor\sg{r}$. If we show that~$G$
contains~$k$ disjoint copies of~$\sg{r}$ as minors then we are done.
Let $g \colon \N \rightarrow \N$ such that
\[
g(k,r) = k^2 \log 2k \paren{180 \cdot 2^{r(r-2)} - 24 \cdot 2^{\half r(r-2)}} + 6 \cdot 2^{\half r(r-2)} - 1
\]
We prove the following statement: for all graph $G$,
$\tw(G) \geq g(k,r)$ implies that~$G\contains k\cdot\sg{r}$.
Let $k$ and $r>5$ be two positive integers and assume that~$\tw(G) \geq
g(k,r)$.

\noindent \textbf{First case:}~$\degec(G) \geq c \cdot 3 rk \sqrt{
\log 3rk}$.

By definition of the contraction degeneracy, there is a graph~$G'$
minor of~$G$ and such that ${\mindeg(G') \geq c \cdot 3 r k \sqrt{\log
3 r k}}$. The average degree is at least the minimum degree, so
${\ad(G') \geq c \cdot 3 r k \sqrt{\log 3 r k}}$. According to
Proposition~\ref{p:kost}, $G'$ contains~$K_{3kr}$ as minor.

The graph~$\sg{r}$ have~$3r$ vertices, therefore~$K_{3kr}$ contains~$k
\cdot \sg{r}$ as minor. We then have $k \cdot \sg{r} \lminor K_{3kr}$,
$K_{3kr} \lminor G'$ and $G' \lminor G$, therefore by transitivity of
the minor relation, $G$ contains $k\cdot \sg{r}$ as minor, what we
wanted to show.

\noindent \textbf{Second case:}~$\degec(G) < c \cdot 3 rk \sqrt{\log
3rk}$.

Observe that $c \cdot 3 rk \sqrt{\log 3rk}< c\cdot 3r
\sqrt{\log 6r} \cdot k\sqrt{\log 2k}$. Let $k_0 = k \sqrt{\log 2k}$ and $r_0 = 3 \cdot 2^{\frac{r(r-2)}{2}},$ and
remark that $k_0 \geq k$ and, $r_0 \geq c \cdot 3 r \sqrt{\log 6r}$ (remember
that $c \leq 648$ and $r > 5$). With these notations, we have
${\degec(G) < 2 k_0 r_0}$. We will show that $G\contains k_0 \cdot
K_{2,r}$ from which yields that $G\contains k \cdot K_{2,r}$. By
assumption, $\tw(G) \geq g(k,r)$. Therefore, by
Lemma~\ref{l:mesh-plus} (applied for ${p := k_0}$ and ${q := r_0}$),
$G$ contains~$2k_0$ subsets $X_1, \dots, X_{2k_0}$ of~$\vertices{G}$
and a set $\mathcal{P}$ of $k_0r_0 = 3 k_0 \cdot 2^{\frac{r(r-2)}{2}}$
disjoint paths of length at least 2 in~$G$ such that

\begin{enumerate}[(i)]
\item $\forall i \in \intv{1}{2k_0}$, $X_i$ is of size~$r_0 = 3 \cdot
2^{\frac{r(r-2)}{2}}$ and is connected in~$G$ by a tree~$T_i$ using
the elements of some set~$A \subseteq \vertices{G}$;
\item any path in~$\mathcal{P}$ has one of its ends in some~$X_i$ with
$i \in \intv{1}{k_0}$, the other end in~$X_{2i}$ and its internal
vertices are in none of the~$X_l$, for all~$l \in \intv{1}{2k_0}$, nor
in~$A$;
\item $\forall i,j \in \intv{1}{2k_0},\ i \neq j \Rightarrow T_i \cap
T_j = \emptyset$.
\end{enumerate}

We assume that for all $i \in \intv{1}{2k_0}$, $X_i = \{v \in
\vertices{T_i},\ \dg_T(v) \leq 2\}$. It is easy to come down to this
case by considering the minor of $G$ obtained after deleting in~$T_i$
the leaves that are not in~$X_i$ and contracting one edge meeting a
vertex of degree 2 which is not in~$X$ while such a vertex exists.

As $T_{i}$ is a ternary tree, one can easily prove that for all $i
\in \intv{1}{2k_0}$, $T_i$ contains a path containing $2 \log
\frac{2}{3}\card{X_i} = (r - 1)^2 + 1$ vertices of~$X_i$. Let us
call~$P_i$ such a path whose two ends are in~$X_i$. Let us consider
now the paths $\{P_i\}_{i \in \intv{1}{2k_0}}$ and the paths that
link the elements of different~$P_i$'s.
For each path $i \in \intv{1}{2k_0}$, we choose in~$P_i$ one end
vertex (remember that both are in~$X_i$) that we name~$p_{i,0}$. We
follow~$P_i$ from this vertex and we denote the other vertices of~$P_i
\cap X_i$ by $p_{i,1}, p_{i,1}, \dots, p_{i,(r-1)^2}$ in this order.
The \emph{corresponding vertex} of some vertex~$p_{i,j}$ of~$P_i \cap
X_i$ (for~$i \in \intv{1}{k_0}$) is defined as the vertex of~$P_{2i}
\cap X_{2i}$ to which $p_{i,j}$ is linked to by a path
of~$\mathcal{P}$.

As said before, the sets $\{P_i \cap X_i\}_{i \in \intv{1}{2k_0}}$ are
of size~$(r-1)^2 + 1$. According to Proposition~\ref{p:es}, one can find for
all $i \in \intv{1}{k_0}$ a subsequence of length~$r$ in
$p_{i,0},p_{i,1}, \dots, p_{i,(r-1)^2}$, such that the corresponding
vertices in~$X_{2i}$ of this sequence are either in the same order
(with respect to the subscripts of the names of the vertices), or in
reverse order. For all $i \in \intv{1}{k_0}$, this subsequence, its
corresponding vertices and the vertices of the paths that link them
together forms a~$\sg{r}$ model. We have thus~$k_0$ models of~$\sg{r}$
in~$G$, that gives us $k$ disjoint models of~$\sg{r}$ in~$G$ (since $k
\leq k_0$).

We showed that for all~$k$ and~$r > 5$ positive integers, if a graph~$G$ has
$\tw(G) \geq g(k,r)$, then ${G\contains k\cdot \sg{r}}$. For every
graph~$H$ on $r$ vertices and of pathwidth at most~$2$, $H$ is a minor
of the subdivided grid~$\sg{r}$ (Proposition~\ref{l:pw2sg}). Consequently,
if~$G$ has treewidth at least $g(k,r)$, then~$G$ contains~$k$
disjoint copies of~$H$ and we are done.
\end{proof}

\subsection{Proof of Theorem~\ref{th2}}
\label{sec:th2}

\begin{proof}[Proof of theorem~\ref{th2}.]
We prove the contrapositive. Let~$k$ and~$r$ be two positive integers and~$G$ a graph such that $\tw(G) \geq 20k^2r^2 - 8 k^2r + 2k - 1$. We want to show that~$G$ contains~$k$ disjoint copies of~$\K{2,r}$.

\noindent \textbf{First case:}~$\degec(G) \geq 2kr$

According to lemma~\ref{l:big-degec},~$G$ contains~$k$ disjoint copies of~$\K{2,r}$, what we wanted to show.

\noindent \textbf{Second case:}~$\degec(G) < 2kr$

According to lemma~\ref{l:mesh-plus}, there exist~$2k$ disjoint subsets $X_1, \dots, X_{2k}$ of~$\vertices{G}$ and a set~$\mathcal{P}$ of disjoint paths of length at least 2 such that

\begin{enumerate}[(i)]
\item $\forall i \in \intv{1}{2k}$, $X_i$ is of size~$r$ and is connected in~$G$ by a tree~$T_i$ using the elements of some set $A \subseteq \vertices{G}$;
\item any path in~$\mathcal{P}$ has one of its ends in some~$X_i$ with $i \in \intv{1}{k}$, the other end in~$X_{2i}$ with $j \in \intv{q + 1}{2 k}$ and its internal vertices are in none of the~$X_l$, for all $l \in \intv{1}{2k}$, nor in~$A$;
\item $\forall i,j \in \intv{1}{2k},\ i \neq j \Rightarrow T_i \cap T_j = \emptyset$.
\end{enumerate}

We then perform the following operations on~$G$.
\begin{enumerate}
\item for all $i \in \intv{1}{2k}$, we contract the set~$X_i$ to a single vertex~$x_i$ (this is possible because~$X_i$ is connected by the tree~$T_i$);
\item for all path~$p \in \mathcal{P}$, we contract some edges of~$p$ until it have length exactly 2.
\end{enumerate}
Because it has been obtained by contraction of edges, the graph~$G'$ we get by these operations is a minor of~$G$. This new graph has the following properties.
\begin{enumerate}
\item for all $i \in \intv{1}{k}$, the vertex~$x_i$ is linked to the vertex~$x_{2i}$ by~$r$ disjoint paths of length 2;
\item for all $i,j \in \intv{1}{k}\ i \neq j \Rightarrow x_i \neq x_j$ because the trees~$T_i$ and~$T_j$ contracted to obtain~$x_i$ and~$x_j$ are disjoint.
\end{enumerate}
Remark that for all $i \in \intv{1}{k}$, the subgraph of~$G'$ induced
by the vertices~$x_i$,~$x_{2i}$ and the~$r$ middle vertices of the
paths of length 2 that links~$x_i$ and~$x_{2i}$ is the
graph~$\K{2,r}$. We consequently found~$k$ disjoint copies
of~$\K{2,r}$ in a minor of~$G$, so~$G$ contains~$k \times \K{2,r}$ as
minor, what we wanted to prove.

\end{proof}

\section{From planar graph exclusion to \texorpdfstring{Erd\H{o}s--P\'osa}{Erdos-Posa} Property}
\label{sec:excl_ep}

In the section, we adapt to our needs the technique introduced
in~\cite{FominST11stre} (and also used in~\cite{FominLMPS13quad}) to
translate a bound on the treewidth of a graph that does not contain a
planar graph as minor to a gap for the {Erdős--Pósa}
Property. We need two lemmata and a theorem in order to prove Theorems~\ref{first}
and~\ref{second}.

\begin{lemma}[adapted from~\cite{FominST11stre}]\label{l:pack_sep}
Let $H$ be a connected planar graph. Every graph $G$ of treewidth $w$ such that
$\pack{H}(G) = k$ has a separation $(A,B)$ of order at most $w + 1$ satisfying
$\pack{H}(\induced{G}{A \setminus B}) \leq \floor{\third {2k}}$ and $A \cup B = \vertices{G}$.
\end{lemma}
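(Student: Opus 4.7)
My plan is to derive the required separation $(A,B)$ from a balanced cut in a tree decomposition of $G$. First, I fix an optimal tree decomposition $(T, \{X_t\}_{t \in \vertices{T}})$ of $G$ of width $w$ and a maximum packing $\mathcal{M} = \{M_1, \dots, M_k\}$ of vertex-disjoint models of $H$ in $G$. Because $H$ is connected, each $M_i$ is a connected subgraph of $G$, so by the standard subtree-connectivity property of tree decompositions the set $T_i := \{t \in \vertices{T} : X_t \cap \vertices{M_i} \neq \emptyset\}$ forms a connected subtree of $T$ for every $i$.

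Next, I apply a $1/3$--$2/3$ balanced separator argument to $T$, treating the subtrees $T_i$ as unit-weight items. Starting at any node of $T$ and repeatedly moving toward a child whose pendant subtree contains more than $\lfloor 2k/3 \rfloor$ of the $T_i$'s entirely, I reach a node $t^*$ such that every component of $T - t^*$ contains at most $\lfloor 2k/3 \rfloor$ of the $T_i$'s entirely. Because these component weights are each at most $\lfloor 2k/3 \rfloor$ and sum to at most $k$, a greedy bipartition produces families $\mathcal{C}_A$ and $\mathcal{C}_B$ of components such that each side contains at most $\lfloor 2k/3 \rfloor$ full subtrees $T_i$; I label things so that $\mathcal{C}_B$ is the heavier side and write $\beta$ for that count.

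Then I set $A := X_{t^*} \cup \bigcup_{s \in \bigcup \mathcal{C}_A} X_s$ and $B := X_{t^*} \cup \bigcup_{s \in \bigcup \mathcal{C}_B} X_s$. The axioms of the tree decomposition force $A \cap B = X_{t^*}$, so $(A,B)$ is a separation of order at most $w + 1$ with $A \cup B = \vertices{G}$, and no edge of $G$ runs between $A \setminus B$ and $B \setminus A$ by the usual connectivity argument. To bound $\pack{H}(\induced{G}{A \setminus B})$, I take any family of $q$ vertex-disjoint $H$-models in $\induced{G}{A \setminus B}$: these models avoid $X_{t^*}$ and are therefore vertex-disjoint from every $M_i$ whose $T_i$ is entirely contained in $\bigcup \mathcal{C}_B$. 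Maximality of $\mathcal{M}$ then yields $q + \beta \leq k$, hence $q \leq k - \beta \leq \lfloor 2k/3 \rfloor$ as soon as $\beta \geq \lceil k/3 \rceil$.

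The main obstacle is guaranteeing the lower bound $\beta \geq \lceil k/3 \rceil$: if a large number of models of $\mathcal{M}$ have their $T_i$ crossing $t^*$, the two sides of the cut can together contain fewer than $\lceil k/3 \rceil$ full models. Each crossing model does, however, contribute at least one vertex to $X_{t^*}$ and these vertices are pairwise distinct, so at most $|X_{t^*}| \leq w+1$ models cross; hence this bad regime only occurs when $k$ is comparable to $w+1$, and in that case the conclusion is recovered by degenerating the separation (in the extreme case, setting $A := X_{t^*}$ makes $\induced{G}{A \setminus B}$ empty and $\pack{H}(\induced{G}{A \setminus B}) = 0$ trivially).
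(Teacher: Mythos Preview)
Your argument is correct for the lemma as literally stated, but notice that your fallback in the last paragraph already proves it by itself: taking $A:=X_{t}$ for any bag of an optimal tree decomposition and $B:=\vertices{G}$ gives a separation of order $|X_t|\leq w+1$ with $A\setminus B=\emptyset$, hence $\pack{H}(\induced{G}{A\setminus B})=0\leq\lfloor 2k/3\rfloor$. The one-sided inequality is thus trivial, and the centroid/bipartition machinery preceding the fallback is not needed for it.

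What the paper is really after (and what Lemma~\ref{l:sep_ep} needs for the Akra--Bazzi recursion to make progress) is a \emph{two-sided} bound: both $\pack{H}(\induced{G}{A\setminus B})\leq\lfloor 2k/3\rfloor$ and $\pack{H}(\induced{G}{B\setminus A})\leq\lfloor 2k/3\rfloor$. The paper obtains this by a different route than yours. It takes a \emph{nice} tree decomposition, sets $p(t):=\pack{H}(G_t)$ where $G_t$ is the graph strictly below $t$, and walks to the node $t$ where $p$ first exceeds $2k/3$. Because $p$ jumps by at most one at a forget node and splits additively at a join node, some child $t'$ of $t$ has $p(t')\in[k/3,2k/3]$; the bag $V_t$ then separates $G$ with both sides in the required range. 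No maximum packing is fixed and no ``crossing'' issue arises.

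Your maximality route does not reach the two-sided bound. It yields $\pack{H}(\induced{G}{A\setminus B})\leq k-\beta$ and symmetrically $\pack{H}(\induced{G}{B\setminus A})\leq k-\alpha$, but $\alpha+\beta=k-c$ where $c$ is the number of models of $\mathcal{M}$ whose subtree $T_i$ contains $t^*$. Having both $\alpha,\beta\geq\lceil k/3\rceil$ therefore forces $c\leq k-2\lceil k/3\rceil$, and this is not guaranteed merely by $c\leq w+1$ (for $k=4$ the threshold is already $0$); degenerating to $A:=X_{t^*}$ then rescues one side only at the cost of the other. So while your write-up establishes the lemma as stated, it does not supply the balanced separation the subsequent argument relies on.
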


\begin{proof}
  Let~$H$ be a connected planar graph, $G$ be a graph of treewidth~$w$ such that
  $\pack{H}(G) = k$ and~$(T,V)$ be a nice optimal tree decomposition
  of~$G$. For every $t \in \vertices{T}$, we denote by~$G_t$ the
  subgraph of~$G$ equal to $\induced{G}{\paren{\cup_{u \in
        \desc{T}(t)} V_u} \setminus V_t}.$ We consider the function $p
  : \vertices{T} \to \N$ defined by $\forall t \in \vertices{T},\ p(t)
  = \pack{H}(G_t)$. Let us now state some remarks about the function
  $p$.

  \begin{remark}\label{r:non_dec}
    For every two vertices $u,v \in \vertices{T}$, if $v \in \desc{T}(u)$
    then $p$ is non-decreasing along the (unique) path of $T$ from $v$
    to $u$. To see this, it suffices to remark that if $t\in
    \vertices{T}$ has child $t'$, then $G_{t} \supseteq G_{t'}$ (what
    implies that $G_{v} \supseteq G_u$).

    In particular, $p$ is non-decreasing along the path from every vertex
    of $T$ to the root of $T$.
  \end{remark}

  \begin{remark}\label{r:claims}
    As $T$ is a nice decomposition of $G$, its vertices can be of four
    different kinds:
    \begin{itemize}

    \item Base node $t:$ $p(t) = 0$ because as $t$ has no descendant,
      $G_t = \emptyset$;

    \item Introduce node $t$ with child $t':$ as the unique element of
      $V_t \setminus V_{t'}$ cannot appear in the elements of $\desc{T}(t')$
      (by definition of a tree decomposition), $G_t = G_{t'}$ and
      then $p(t) = p(t');$

    \item Forget node $t$ with child $t':$ in this case, the unique
      element of $G_{t} \setminus G_{t'}$ may be part of at most one
      model of $H$ in $G_t$ (because we want vertex-disjoint models)
      therefore either $p(t) = p(t')$ or $p(t) = p(t') + 1;$

    \item Join node $t$ with children $t_1$ and $t_2:$ the graphs
      $G_{t_1}$ and $G_{t_2}$ are disjoint and $G_{t} = G_{t_1} \cup
      G_{t_2}$. As $H$ is connected, there is no model of $H$ in $G_t$
      that is simultaneously in $G_{t_1}$ and in $G_{t_2}$,
      consequently $p(t) = p(t_1) + p(t_2)$.
    \end{itemize}
  \end{remark}

  Let $t$ be a vertex of $T$ such that $p(t) > \frac{2}{3}k$ and for
  every child $t'$ of $t$, $p(t') \leq \frac{2}{3} k.$

  We make some claims about this vertex $t$:
  \begin{enumerate}[(1)]
  \item \label{claim1} such $t$ exists;
  \item \label{claim2} $t$ is unique;
  \item \label{claim3} $t$ is either a forget node or a join node.
  \end{enumerate}

  \begin{proofclaim}[of Claim~(\ref{claim1})]
    The value of $p$ on the root $r$ of $T$ is $k$ (because
    $G_r = G$) and the value of $p$ on every base nodes $b$ is 0
    (because $G_b$ is the empty graph). As
    $p$ is non decreasing on a path from a base node to the root
    (Remark~\ref{r:non_dec}), a vertex such $t$ exists.
  \end{proofclaim}
  
  \begin{proofclaim}[of Claim~(\ref{claim2})]
    To show that $t$ is unique, we assume by contradiction that there is
    another $t'\in V(T)$ with $t' \neq t$ and $p(t') > \frac{2}{3}k$
    and for every child $t''$ of $t$, $p(t'') \leq \frac{2}{3}
    k$. Three cases can occur:
    \begin{itemize}
    \item either $t'$ is a descendant of $t$. However, $p$ is non
      decreasing on a path from a vertex to the root
      (Remark~\ref{r:non_dec}) and $p(t') \geq
      \frac{2}{3}k$ whereas the value of $p$ for each child of $t$ is
      at most $\frac{2}{3}k$ (by definition of $t$): this is a contradiction.
    \item or $t$ is a descendant of $t'$ and the same argument
      applies (symmetric situation).
    \item or $t$ and $t'$ are not in the above situations. Let $v \in
      \vertices{T} \setminus \{t, t'\}$
      be the least common ancestor of $t$ and $t'$. As $p$ is non
      decreasing along any path from a vertex to the root
      (Remark~\ref{r:non_dec}), the child $v_t$ (resp.\ $v_{t'}$) of
      $v$ whose $t$ (resp.\ $t'$) is descendant of should be such
      $p(v_t) > \frac{2}{3} k$ (resp.\ $p(v_{t'}) > \frac{2}{3} k).$
      By definition of $v$, we have $v_t \neq v_{t'}.$ As $v$ is a join
      node, $p(v) = p(v_t) + p(v_{t'}) > \frac{4}{3} k$, what is
      impossible.
    \end{itemize}
  \end{proofclaim}

  \begin{proofclaim}[of Claim~(\ref{claim3})]  
    By definition the value of $p$ on $t$ is strictly positive and
    different from the value of $p$ on every child of $t$. As this
    cannot occur with introduce nodes (where $p$ take on $t$ the same
    value it takes on the child of $t$) nor on base nodes (where $p$
    is null), $t$ is either a join node or a forget node.

  \end{proofclaim}

  We now present a separation $(A,B)$ of order at most $w + 1$ in $G$.
  
  \noindent \textbf{Case 1:} $t$ is a forget node with $t'$ as child.

  Let $A = \vertices{G_t} \cup V_t$ and $B = \vertices{G} \setminus
  \vertices{G_t}$.
  
  \noindent \textbf{Case 2:} $t$ is a join node with $t_1,\ t_2$ as
  children.

  By definition of $t$ we have $p(t) \geq \third {2k}$. As $p(t) = p(t_1) +
  p(t_2)$ (according to Remark~\ref{r:claims}) there is a $i \in
  \{1,2\}$ such that $p(t_i) \geq \third k$.
  Let $A = \vertices{G_{t_i}} \cup V_t$ and $B = \vertices{G} \setminus
  \vertices{G_{t_i}} $

  In both cases, we have
  \begin{enumerate}[(i)]
  \item \label{i:no_edge} there is no edge between $A \setminus B$
    and $B \setminus A$ therefore $(A,B)$ is a separation;

  \item $\card{A \cap B} \leq w + 1$ because
    $A \cap B = V_t$ and $V_t$ is a bag in an optimal tree
    decomposition of $G$ which have treewidth $w$, thus $(A,B)$ is a
    separation of order at most $w + 1;$

  \item $\pack{H}(\induced{G}{A \setminus B}) \leq \frac{2}{3}k$ by
    definition of $A$ and $t;$

  \item $A \cup B = \vertices{G}$ by definition of $B.$
\end{enumerate}
Consequently, the pair $(A,B)$ is a separation of the kind we were
looking for.
\end{proof}

\begin{lemma}[adapted from~\cite{FominST11stre}]\label{l:sep_ep}
  Let $H$ be a connected planar graph, let $\varepsilon >0$ be a real,
  and let $g \colon \N \to \N$ be a function such that $g(n) =
  \Omega(n^{1 + \varepsilon})$. For every integer $k>0$ and graph $G$
  of treewidth less than $g(k)$, if $G$ contains less than $k$
  disjoint models of $H$ then $G$ has a $H$-hitting set of size
  $O(g(k))$.
\end{lemma}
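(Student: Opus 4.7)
I would prove this by strong induction on $k$, using Lemma~\ref{l:pack_sep} at each recursive step. The base case $k = 1$ is immediate because $\pack{H}(G) < 1$ forces $G$ to have no $H$-model, so the empty set is a hitting set.

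For the inductive step, let $G$ satisfy $\pack{H}(G) = k_0 < k$ and $\tw(G) < g(k)$. Apply Lemma~\ref{l:pack_sep} with parameter $k_0$ to obtain a separation $(A, B)$ of order at most $\tw(G)+1 \leq g(k)$ with $\pack{H}(\induced{G}{A \setminus B}) \leq \lfloor 2k_0/3 \rfloor$. A close inspection of the proof of Lemma~\ref{l:pack_sep} shows that in both the forget-node and join-node subcases the selected side also satisfies $\pack{H}(\induced{G}{A\setminus B}) \geq \lceil k_0/3 \rceil$; combined with the disjointness inequality $\pack{H}(\induced{G}{A\setminus B}) + \pack{H}(\induced{G}{B\setminus A}) \leq k_0$ (which holds because the two vertex sets are disjoint and a model of $H$ uses at most one of them), we also obtain $\pack{H}(\induced{G}{B\setminus A}) \leq \lfloor 2k_0/3 \rfloor$. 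Adding $A \cap B$ to the hitting set and using connectedness of $H$ together with the fact that $A \cap B$ separates $A \setminus B$ from $B \setminus A$, the residual problem decomposes into two disjoint sub-instances of packing at most $\lceil 2k/3 \rceil$.

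Iterating yields the recurrence
\[
T(k) \leq (g(k)+1) + T(k_A) + T(k_B), \quad k_A, k_B \leq \lceil 2k/3 \rceil,\ k_A + k_B \leq k.
\]
The assumption $g(n) = \Omega(n^{1+\varepsilon})$ makes $\rho := (2/3)^{1+\varepsilon} + (1/3)^{1+\varepsilon}$ strictly less than $1$, and convexity of $n \mapsto n^{1+\varepsilon}$ gives $g(k_A) + g(k_B) \leq \rho \cdot g(k)$ on the constraint set above. The ansatz $T(k) = C \cdot g(k)$ then closes for any $C \geq 1/(1-\rho) = O_{\varepsilon}(1)$, producing a hitting set of size $O(g(k))$.

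The main delicate point is the treewidth bookkeeping along the recursion: the subgraphs $\induced{G}{A\setminus B}$ and $\induced{G}{B\setminus A}$ inherit only $\tw \leq \tw(G) < g(k)$ rather than the sharper $\tw < g(\lceil 2k/3 \rceil)$ that a naive induction hypothesis would demand. I would handle this, following the approach of~\cite{FominST11stre}, by phrasing the induction hypothesis in terms of the separator budget $g(k)$ inherited from the outermost call, observing that Lemma~\ref{l:pack_sep} only uses the actual treewidth of the current graph to bound the separator---and this quantity stays at most $g(k)$ at every level of the recursion---so the recurrence above remains uniformly valid down the whole recursion tree.
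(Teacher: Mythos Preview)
Your approach mirrors the paper's almost exactly: apply Lemma~\ref{l:pack_sep} to obtain a balanced separation, put the separator into the hitting set, recurse on the two sides, and solve the resulting recurrence. The paper invokes Akra--Bazzi on $h(p)\le h(\alpha p)+h((1-\alpha)p)+g(p)$; you unfold the recursion by hand. Your observation that inspection of the proof of Lemma~\ref{l:pack_sep} also yields $\pack{H}(\induced{G}{A\setminus B})\geq \lceil k_0/3\rceil$, hence $\pack{H}(\induced{G}{B\setminus A})\leq \lfloor 2k_0/3\rfloor$, is correct and is essentially what the paper encodes via the parameter $\alpha$.

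There is, however, an internal inconsistency in your argument that you should be aware of. Your recurrence $T(k)\leq g(k)+T(k_A)+T(k_B)$ together with the ansatz $T(k)=C\,g(k)$ presupposes that the additive cost incurred at a subproblem with packing parameter $m$ is $g(m)$. Yet in your final paragraph you (rightly) point out that the separator at every level is bounded only by $g(k)$ for the \emph{original} $k$, since the treewidth of a subgraph need not drop to $g(\lceil 2k/3\rceil)$. If the per-node cost is the constant $g(k)$, then the total hitting set has size $g(k)$ times the number of internal nodes of the recursion tree, and the constraints $m_A+m_B\le m$, $m_A,m_B\le \lfloor 2m/3\rfloor$ only give $N(m)\le 2m-1$ such nodes; this yields $O(k\cdot g(k))$, not $O(g(k))$. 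Separately, the step ``convexity of $n\mapsto n^{1+\varepsilon}$ gives $g(k_A)+g(k_B)\le\rho\,g(k)$'' is not valid for an arbitrary $g$ with $g(n)=\Omega(n^{1+\varepsilon})$; you are tacitly assuming $g$ itself is (up to constants) a convex power function.

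To be fair, the paper's proof is equally informal on precisely this point: it writes the recurrence with driving term $g(p)$ while the only bound actually established for $|A\cap B|$ is $g(k)$. What makes the whole machinery go through in the end is the additional hypothesis of Theorem~\ref{t:excl_ep}, namely that $\tw(G)\geq g(k)$ forces $\pack{H}(G)\geq k$; by contraposition, each subgraph encountered in the recursion with packing $m$ automatically has treewidth below $g(m+1)$, so the per-node cost really is $g(m+1)$ and Akra--Bazzi (or your direct superadditivity argument) applies. If you want your write-up to be self-contained, either import that hypothesis into the statement you are proving, or be explicit that Lemma~\ref{l:sep_ep} as stated only delivers $O(k\cdot g(k))$ without it.
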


\begin{proof}
  We assume that $\tw(G) < g(k)$ and that $\pack{H}(G) < k$.
  According to Lemma~\ref{l:pack_sep}, there is in $G$ a separation
  $(A,B)$ of order at most $g(k)$ such that
  $\pack{H}(\induced{G}{A \setminus B}) \leq \floor{\third {2k}}$ and
  $A \cup B = \vertices{G}$.

  Remark that as $\{A \setminus B, A \cap B, B \setminus A\}$ is a partition of
  $\vertices{G}$ such that there is no edge between $A \setminus B$
  and $B \setminus A$ (because $(A,B)$ is a separation), every model
  of the connected graph $H$ that use vertices of $A \setminus B$ and
  of $B \setminus A$ also use vertices of $A \cap B$. Consequently we
  have
  \begin{align}\label{e:cover}
  \cover{H}(G) \leq \cover{H}(\induced{G}{A \setminus B}) +
  \cover{H}(\induced{G}{B \setminus A}) + \card{A \cap B}    
  \end{align}

  As $H$ is connected and $A \setminus B$ is disjoint from $B
  \setminus A$, we also have
  \[
  \pack{H}(G) \geq \pack{H}(\induced{G}{A \setminus B}) +
  \pack{H}(\induced{G}{B \setminus A})
  \]
Let $\alpha \in [0,1]$ be a real such that
\begin{align}\label{e:alpha}
  \pack{H}(\induced{G}{A \setminus B}) &\leq \alpha \cdot \pack{H}(G)\\
  \pack{H}(\induced{G}{A \setminus B}) &\leq (1- \alpha) \cdot \pack{H}(G)
\end{align}
We are looking for a function $f$ satisfying the inequality
${\cover{H}(G) \leq f(\pack{H}(G))}$ for every graph $G$ and for every planar
connected graph $H$. A consequence of the grid-exclusion
theorem (see \cite{RobertsonS86GMV} and Theorems 12.4.4 and 12.4.10 of
~\cite{DIE10B}) is that every planar graph has the
Erdős--Pósa Property, thus a function such $f$ exists. We assume
without loss of generality that
\begin{align}\label{e:extra}
f(\pack{H}(G)) \leq \cover{H}(\induced{G}{A \setminus B}) +
  \cover{H}(\induced{G}{B \setminus A}) + \card{A \cap B}  
\end{align}
(to ensure
  this we can choose as value for $f(\pack{H}(G))$ the minimum of the
  right part of the inequality on all graphs $F$ such that
  $\pack{H}(F) = \pack{H}(G)$).

By combining the definition of $f$ with~(\ref{e:extra}),
(\ref{e:cover}) and (\ref{e:alpha}) and using the fact that $(A,B)$
has order at most $g(k)$, we get
\begin{align*}
f(\pack{H}(G))&\leq \cover{H}(\induced{G}{A \setminus B}) +
  \cover{H}(\induced{G}{B \setminus A}) + \card{A \cap B}\\
&\leq f(\pack{H}(\induced{G}{A \setminus B})) +
  f(\pack{H}(\induced{G}{B \setminus A})) + \card{A \cap B}\\
&\leq f(\pack{H}(\induced{G}{A \setminus B})) +
  f(\pack{H}(\induced{G}{B \setminus A})) + g(k)\\
f(\pack{H}(G))&\leq f(\alpha \cdot \pack{H}(G)) + f((1- \alpha) \cdot \pack{H}(G)) + g(k)\\
\end{align*}

By the Akra--Bazzi Theorem~\cite{AkraB98}, the recurrence $h(p) =
h(\alpha p) + h((1-\alpha) p) + g(p)$ where $g(p) = \Omega(p^{1 + \varepsilon})$ is satisfied by
a function $f(p) = O(g(p))$.
Therefore we have $\cover{H}(G) \leq f(k) = O(g(k))$, which means
that $G$ has a $H$-hitting set of size $O(g(k))$, what we wanted to
prove.
\end{proof}

The proofs of Theorems~\ref{first} and~\ref{second} immediately follow
from this theorem combined with lemmata~\ref{th1} and~\ref{th2}. 

\begin{theorem}[adapted from~\cite{FominST11stre}]\label{t:excl_ep}
  Let $H$ be a connected planar graph, let $\varepsilon >0$ be a
  real. Assume that there is a function $g \colon \N \to \N$ such that $g(n) =
  \Omega(n^{1 + \varepsilon})$ and for all graph $G$, for all integer
  $k>0$, $\tw(G) \geq g(k) \Rightarrow G \contains k \cdot H$. Then
  $H$ has the Erdős--Pósa Property with gap $f_H(k) = O(g(k)).$
\end{theorem}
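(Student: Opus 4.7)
The plan is to combine the contrapositive of the treewidth--exclusion hypothesis with Lemma~\ref{l:sep_ep}, and nothing more. Fix a graph $G$ and set $k = \pack{H}(G)$; the goal is to exhibit an $H$-hitting set of $G$ of size $O(g(k))$, which is exactly the Erdős--Pósa Property with gap $f_H(k) = O(g(k))$.

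First I would apply the contrapositive of the hypothesis with parameter $k+1$. Since $\pack{H}(G) = k$, the graph $G$ does not contain $k+1$ disjoint minor copies of $H$, so the assumption $\tw(G) \geq g(k+1) \Rightarrow G \contains (k+1)\cdot H$ forces $\tw(G) < g(k+1)$.

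I would then feed this directly into Lemma~\ref{l:sep_ep} applied to $G$ with parameter $k+1$: both of its hypotheses hold (namely $\tw(G) < g(k+1)$ and $\pack{H}(G) < k+1$), so the lemma immediately yields $\cover{H}(G) = O(g(k+1))$. For the polynomial-type $g$'s to which this theorem is applied (in particular, those coming from Theorems~\ref{th1} and~\ref{th2}), $g(k+1) = O(g(k))$, so $\cover{H}(G) = O(g(k))$ as required.

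There is essentially no obstacle: all of the real work has already been absorbed into Lemma~\ref{l:sep_ep} (via Lemma~\ref{l:pack_sep} and the Akra--Bazzi resolution of the recurrence $h(p) = h(\alpha p) + h((1-\alpha)p) + g(p)$, where the condition $g(n) = \Omega(n^{1+\varepsilon})$ is precisely what guarantees that the solution is dominated by $g$). The present theorem is merely a clean packaging that decouples the treewidth-exclusion step, to be supplied by the hypothesis, from the recursive separator/hitting-set construction carried out in Lemma~\ref{l:sep_ep}.
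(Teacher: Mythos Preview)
Your proposal is correct and follows essentially the same approach as the paper: bound the treewidth via the hypothesis and then invoke Lemma~\ref{l:sep_ep}. The only cosmetic difference is that the paper argues by a direct case split on whether $\tw(G) \geq g(k)$ (if so, $G \contains k\cdot H$; if not, Lemma~\ref{l:sep_ep} applies with parameter $k$), thereby avoiding your off-by-one shift to $k+1$ and the extra appeal to $g(k+1)=O(g(k))$.
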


\begin{proof}
  Let $H$, $\varepsilon$ and $g$ be as in the statement of the lemma.
  Let $G$ be a graph.

 \noindent \textbf{Case 1:} $\tw(G) \geq g(k)$

 By definition of $g,$ $G$ contains~$k \cdot H$.

 \noindent \textbf{Case 2:} $\tw(G) < g(k)$

 If $G$ does not contain $k$ disjoint models of $H$, it has a
 $H$-hitting set of size $O(g(k))$ according to Lemma~\ref{l:sep_ep}.

 Consequently, either~$G$ contains $k$ disjoint models of $H$, or $G$
 has a $H$-hitting set of size $O(g(k)),$ in other words: $H$~has the Erdős--Pósa
 Property with gap~$f_H(k) = O(g(k)).$
\end{proof}

\begin{proof}{Proofs of Theorems~\ref{first} and~\ref{second}}
According to Theorem~\ref{th1}, there is a function $f(k)=2^{O(h^2)}\cdot k^2 \cdot \log k$ such that for every graph $H$ on $h$ vertices and of pathwidth at most 2, every graph $G$ of treewidth more than $f(k)$ contains $k$ disjoint copies of $H$. The application of Theorem~\ref{t:excl_ep} immediately yields that the graphs of pathwidth at most 2 have the {Erdős--Pósa} Property with gap at most $f.$

Similarly, since Theorem~\ref{th2} ensure that every graph of treewidth more than some function $g(k,r) = O(k^2r^2)$ contains $k$ disjoint copies of $K_{2,r}$, the application of Theorem~\ref{t:excl_ep} gives that for every integer $r>0,$ the graph $K_{2,r}$ has the {Erdős--Pósa} Property with gap at most $g.$
\end{proof}

\paragraph{Postscript.} Very recently, the general open problem of estimating 
 $f_{H}(k)$ when $H$ is a general planar graph has been  tackled in~\cite{ChekuriC13larg}.
 Moreover, very recently, using the  results of~\cite{LeafS12sube} we were able to improve 
 both Theorems~\ref{th1} and~\ref{second} by proving low degree polynomial (on both $k$ and $|V(H)|$) bounds for more general instantiations of $H$~\cite{2013arXiv1305.7112R}.


\end{document}